\newtheorem{lemma}{Lemma}
\newcounter{ct}
\def\@eqnnum{{\normalsize \normalcolor (\theequation)}} 
\begin{document}
\title{Joint Optimization of IRS Deployment and Passive Beamforming to Enhance the Received Power}
\author{Jyotsna Rani, \textit{Student Member, IEEE}, Deepak Mishra, \textit{Member, IEEE}, Ganesh Prasad, \textit{Member, IEEE}, Ashraf Hossain, \textit{Senior Member, IEEE}, and Swades De, \textit{Senior Member, IEEE}, Kuntal Deka, \textit{Member, IEEE}
	\thanks{J. Rani and K. Deka are with the Department of Electronics and Electrical Engineering, Indian Institute of Technology, Guwahati, India (e-mail: \{r.jyotsna, kuntaldeka\}@iitg.ac.in).

		G. Prasad and A. Hossain are with the Department of Electronics and Communication Engineering, National Institute of Technology, Silchar, India (e-mail: \{gpkeshri, ashraf\}@ece.nits.ac.in).
		
		D. Mishra is with the School of Electrical Engineering and Telecommunications
		UNSW Sydney, Australia (e-mail: d.mishra@unsw.edu.au).
		
		S. De is with the Department of Electrical Engineering and Bharti School of Telecommun., Indian Institute of Technology Delhi, New Delhi, India (e-mail: swadesd@ee.iitd.ac.in).
}}

\maketitle
\begin{abstract}
Intelligent reflecting surface (IRS) has recently emerged as a promising technology for beyond fifth-generation (B5G) and 6G networks conceived from metamaterials that smartly tunes the signal reflections via a large number of low-cost passive reflecting elements. However, the IRS-assisted communication model and the optimization of available resources needs to be improved further for more efficient communications. This paper investigates the enhancement of received power at the user end in an IRS assisted wireless communication by jointly optimizing the phase shifts at the IRS elements and its location. Employing the conventional Friss transmission model, the relationship between the transmitted power and reflected power is established. The expression of received power incorporates the free space loss, reflection loss factor, physical dimension of the IRS panel, and radiation pattern of the transmit signal. Also, the expression of reflection coefficient of IRS panel is obtained by exploiting the existing data of radar communications. Initially exploring a single IRS element within a two-ray reflection model, we extend it to a more complex multi-ray reflection model with multiple IRS elements in 3D Cartesian space. The received power expression is derived in a more tractable form, then, it is maximized by jointly optimizing the underlying underlying variables, the IRS location and the phase shifts. To realize the joint optimization of underlying variables, first, the phase shifts of the IRS elements are optimized to achieve constructive interference of received signal components at the user. Subsequently, the location of the IRS is optimized at the obtained optimal phase shifts. Numerical insights and performance comparison reveal that joint optimization leads to a substantial $37\%$ enhancement in received power compared to the closest competitive benchmark scheme.
\end{abstract} 
\begin{IEEEkeywords}
	Intelligent reflecting surface, passive beamforming, joint optimization, Friis transmission model, IRS deployment
\end{IEEEkeywords}

\section{Introduction}
Intelligent Reflecting Surfaces (IRS) are cutting-edge wireless communication technologies that hold tremendous promise in transforming wireless networks. IRS consists of specially engineered materials integrated with antennas or passive elements. They can manipulate and control the propagation of electromagnetic waves in real-time, allowing for dynamic and adaptive wireless signal shaping. Optimizing the available resources in the IRS-assisted networks can lead to an improved signal strength, minimum interference, and a larger coverage range. IRS technology is expected to revolutionize various applications, such as 5G and beyond, Internet of Things (IoT), and wireless communications in challenging environments. 

\subsection{Related Works}
Unlike typical Radio Frequency (RF) receivers, IRS lacks the capability to detect the received signal~\cite{wu23}, which presents a challenge in modeling or estimating the associated channels. The existing works considered the conventional channel models for IRS links like Rayleigh fading~\cite{bas24} or Rician fading for deterministic line-of-sight (LOS) components and i.i.d complex Gaussian distribution for non-line-of-sight (NLOS) components~\cite{han25}. For IRS-assisted multiple-input single-output (MISO) systems, in \cite{dang27}, the channel modeled, and a control loop between the base station and IRS panel used for channel estimation using minimum mean squared error (MMSE). While in an another approach, stochastic channel model based on geometry was taken where the correlation of different subchannels of IRS elements was considered. However, to fully explore IRS communication, it is crucial to incorporate factors such as free space loss, reflection loss factor, physical dimensions of the IRS panel, and the radiation pattern of the transmit signal.

Another line of research focused on optimizing available resources in IRS-assisted wireless networks. In \cite{wu28}, joint optimization of discrete phase shifts of the IRS panel and transmit beamforming minimizes the transmit power of the access point (AP). In \cite{wang29}, the relationship explored between user's transmit power and phase shift at the IRS that leads to joint optimization of transmit power and phase shift for overall power reduction. The authors in \cite{xu30} jointly optimized the phase shifts of IRS, downlink transmit beamforming, and artificial noise covariance matrix at the base station to maximize the sum secrecy rate of the network. Regarding location optimization of IRS, limited works investigated it in \cite{zhang31,jiao33}. In \cite{zhang31}, distributed and centralized deployment strategies with fixed IRS locations were analyzed. Conversely, in \cite{jiao33}, the IRS location optimized initially, thereafter, the transmit beamforming and phase shifts optimized alternately using an iterative algorithm. 

Besides, many works focused on joint optimization of system parameters to enhance the performance of IRS-assisted networks. In \cite{ge12}, active and passive beamforming at an unmanned aerial vehicle (UAV) and the IRS trajectory were jointly optimized to maximize user's received power. The phase shifts of the IRS along with base station (BS) beamforming and artificial noise covariance matrix were optimized in \cite{xu13} to maximize secrecy rate for a given UAV trajectory and the transmit power. Whereas, in \cite{fan14}, the secrecy rate was maximized by jointly optimizing IRS phase shift, UAV trajectory, and the transmit power. In \cite{xie15}, the beamforming, power allocation, and the IRS phase shift were jointly optimized to minimize the transmit power. The total transmit power from the access point (AP) was minimized in \cite{sam16} through joint optimization of AP transmit beamforming and IRS reflect beamforming under the SNR constraints. A dynamic passive beamforming scheme was proposed in \cite{yif17}, where IRS reflection coefficients were adjusted dynamically over different time slots to achieve higher passive beamforming gain. An IRS was applied to generalized frequency division multiplexing (GFDM) communication in \cite{tabat34} to enhance system performance. In multi-IRS aided two-way full-duplex communication systems, joint optimization of IRS location and size was carried out in \cite{chris35}. Optimal positions for different IRS with fixed and variable heights were investigated in \cite{zhu36}. The system sum rate was maximized in \cite{yu18} by jointly optimizing the source precoders and the IRS phase shift matrix in a full-duplex MIMO two-way communication. Alternatively, \cite{you19} maximized sum rate by optimizing the IRS location and showed that it should be near either the AP or the user. This work is used as a benchmark for comparing the proposed work. \emph{However, from the state-of-the-art, we need to explore the joint optimization of the phase shifts as well as the deployment of IRS that includes the important factors such as free space loss, reflection loss factor, physical dimension of the IRS panel, and the radiation pattern of the transmit signal.}

\subsection{Motivation and Key Contributions}
 Using IRS, the communication channel can be intelligently engineered, leading to enhanced end-to-end communications and effective co-channel interference suppression. This paper presents a communication model and derives a reliable expression for received power in an IRS-assisted wireless communication system, emphasizing the significant impact of reflective phase shift and precise IRS location. To enhance its performance, in this work, we analyze the joint optimizations of the underlying variables. The key contribution of this work is three-fold. (1) Leveraging the available radar communication data, we derive the expression for the reflection coefficient of the IRS panel. Along with it, in the two scenarios: a single element and multi-elements IRS, we derive the expressions for received power at a user that include the factors such as free space loss, reflection loss factor, physical dimensions of the IRS panel, and the radiation pattern of the transmit signal. (2) We introduce a joint optimization approach and semi-adaptive schemes for maximizing received power. In location optimization for a fixed phase shift, we exploit approximate method using polynomial curve fitting method for an optimal solution with reduced complexity. The optimization problem's convexity and global optimality are proven, yielding a closed-form solution. This optimal solution reveals insights into constructive and destructive interference when partially optimizing phase shifts and jointly optimizing underlying variables. (3) Through numerical analysis, we explore received power variations based on phase shifts and IRS location, yielding crucial insights into the optimal solution. We also examine phase shift variation with location across different wavenumber values. Additionally, we analyze how system parameters such as the reflection coefficient and the count of IRS elements influence received power. Finally, we rigorously compare the performance of our proposed scheme against a benchmark. 

\section{Problem Definition}\label{Sec:Sys_Model} 
In this section, first, we derive the expression of reflection coefficient of IRS by leveraging the existing data of radar communications. Then, we establish the transmitted power-received power relationship initially considering a single IRS element by introducing a two ray model.
Initially focusing on a single IRS element with an area of s $m^2$ and a single phase shift, we gain insights into jointly optimizing the IRS location and reflection phase shift to maximize received power, which directly correlates with signal-to-noise ratio (SNR) and throughput improvement in the single-user scenario. Consequently, the derived optimal phase shift and IRS location will maximize throughput in the IRS-assisted communication. Next, we extend our analysis to a multi-ray model, incorporating multiple IRS elements, and derive the corresponding received power expression at the receiver. Joint optimization problem for both the cases is formulated in order to enhance the received power keeping phase shift and location of the IRS as underlying variables. 

\subsection{IRS Reflection Model}
Here, we consider an IRS panel consisting of a single element that reflects the incoming signal. During reflection, a part of the signal energy is absorbed and the reflection is not necessarily distributed evenly in all the required possible directions. Moreover, it is difficult to calculate the reflection coefficient of an IRS as it cannot receive the signal on its own. To realize it, we examine the existing received radar data obtained after reflection from different objects as follows.
\begin{table}[!t]
	\centering  
	\caption{Ability of radar to detect aircraft of different sizes~\cite{rez01}}  
	\centering
	\label{tab:table1}  \small
	\scalebox{0.85}{\begin{tabular}{|l|c|c|c|c|}
			\hline
			Object (Aircraft)&Length ($m$) &Width ($m$) & Area ($m^2$) & RCS ($m^2$) \\
			\hline
			Large Fighter                 & 16        & 6        & 96       & 6       \\
			\hline
			Four Passenger  & 13.2      & 5.2      & 68.64    & 2       \\
			Jet/Small Fighter&&&&\\
			\hline
			Small Single  & 10.49     & 3.8      & 39.862   & 1       \\
			Engine Aircraft &&&&\\
			\hline
			Conventional    & 6.32      & 3.2      & 20.224   & 0.5\\
			Winged Missile &&&&\\
			\hline	
	\end{tabular}}	
\end{table}
\begin{figure}[!t] 
	\centering  \includegraphics[width=3.4in]{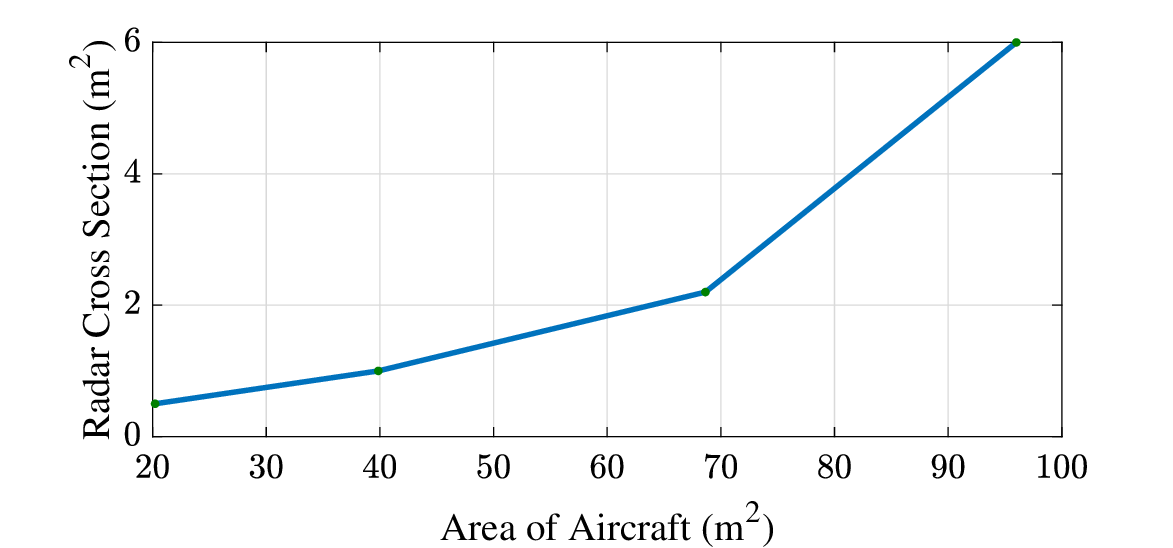}
	\caption{\small From Table~\ref{tab:table1}, obtained variation of RCS, $\sigma$ with area $s$.}    
	\label{fig:sig_s_rel}\vspace{0mm}
\end{figure}
 Indeed, the electromagnetic waveform emitted by the radar is spherical in shape. A part of the waveform is scattered and absorbed on the surface of the aircraft and the remaining is reflected back to the radar to detect it. Specifically, the measurement of radar cross section (RCS), $\sigma$ using reflected signal determines the detection of the distant object. From~\cite{rez05}, omitting the shape and material of the objects, the size of the aeroplanes used in world war~II and corresponding measured RCSs are listed in Table~\ref{tab:table1} and the plot for RCS, $\sigma$ with area of the aircraft, $s$ is drawn in Fig.~\ref{fig:sig_s_rel}. Here, the area of an aircraft is estimated by its measurement of length of the wing and length of the fuselage. In Fig.~\ref{fig:sig_s_rel}, $\sigma$ increases with $s$, to model the relationship mathematically, using Image fitting tool, $\sigma$ in $s$ can be expressed as:
\begin{align}\label{eq:sig_s_rel}\small
	\textstyle\sigma=\textstyle 0.2 s^{2}-0.0961 s-0.76.
\end{align}
 In~\eqref{eq:sig_s_rel}, it is observed that $\sigma$ is approximately proportional to the square of $s$. As $\sigma$ is related to reflected power of the scattered signal from the target object to the radar~\cite{kar02}, the reflection coefficient $\Gamma$ of the object is given by:
\begin{align}\label{eq:ref_coeff}\small
	\textstyle\Gamma=\textstyle\sqrt{\sigma}=\sqrt{0.2 s^{2}-0.0961 s-0.76}
\end{align}
As the obtained expression of $\Gamma$ in~\eqref{eq:ref_coeff} relies on the size of the object, so, it is not limited to IRS, but true for any physical object of interest of same area $s$. 

\subsection{Transmission Model for Received Power at User $\mathcal{U}$ for Single IRS Element}
From the conventional Friis transmission model~\cite{kum03}, the received power, $P_r$ at a user for the given transmit power, $P_t$ is given by
\begin{align}\label{eq:friis}\small
	\textstyle\frac{P_{r}}{P_{t}}=\textstyle G_{t} G_{r}\left(1-\left|\Gamma_{t}\right|^{2}\right)\left(1-\left|\Gamma_{r}\right|^{2}\right)\left(\frac{\lambda}{4 \pi l}\right)^{2}\left|\hat{e_{n}} \cdot \hat{e}_{r}\right|^{2},
\end{align} 
where $l$ is the distance between transmitter and receiver, $\lambda$ is the wavelength of transmit electromagnetic wave, and $|\hat{e_{n}} \cdot \hat{e}_{r}|^{2}$ is the polarization loss factor. $G_t$ and $G_r$ are the gains, whereas $\Gamma_t$ and $\Gamma_r$ are the reflection coefficients at the transmit and receive antennas, respectively. However, as described in~\cite{kum03}, the Friis equation in~\eqref{eq:friis} is effectively applicable for free space propagation. In practice, IRS-assisted wireless nodes are deployed near ground, the reflected signal components are also received at the receiver apart from the signal through direct path. Therefore, we need to develop a communication model for the IRS-assisted system as shown in Fig.~\ref{fig:sys_topo}. 
\begin{figure}[!t] 
	\centering  \includegraphics[width=3.0in]{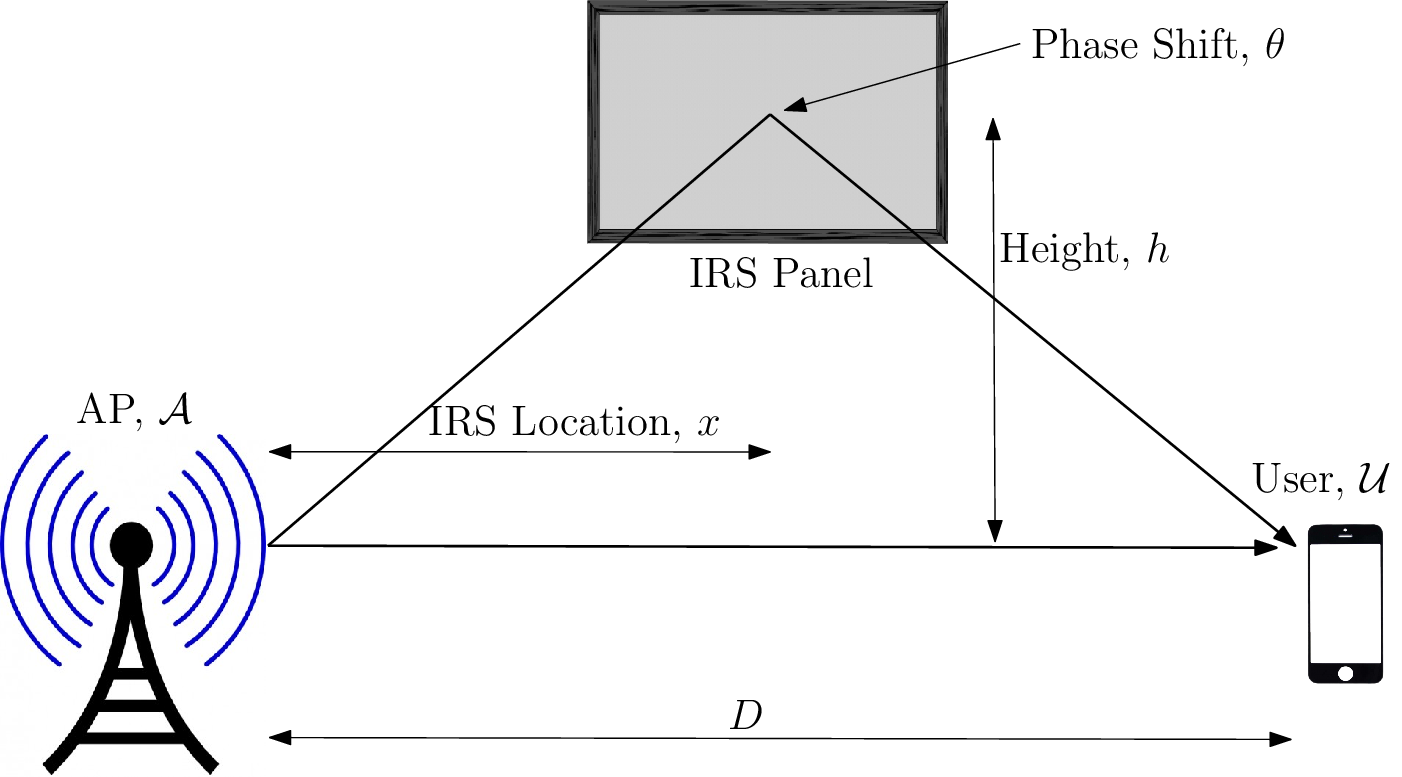}
	\caption{\small IRS-assisted two-ray communication model.}    
	\label{fig:sys_topo}\vspace{-0mm}
\end{figure}
The two-ray reflection model for the network has been shown in Fig.~\ref{fig:sys_topo} where the transmit signal from the AP, $\mathcal{A}$ is received to user, $\mathcal{U}$ through reflection from the IRS panel and direct path. Note that the transmit antenna is omnidirectional in nature. The distance between $\mathcal{A}$ and $\mathcal{U}$ is $D$ and IRS panel is offset by height, $h$ from the direct path. We assume that the panel remain at the constant height, $h$, but can adjust the location, $x$ along the direct path between $\mathcal{A}$ and $\mathcal{U}$. The reflected and direct received signal components add constructively or destructively depending on the phase shift via the two paths. For the wavenumber $k=2\pi/\lambda$, the phase shift in direct path is $kD$, whereas the reflected component through IRS panel faces the phase shift $kd$ due to path length $d$ and $\theta$ because of reflection at the panel.  Therefore, for the negligible reflection coefficients at the transmitter and receiver, the received power, $P_r$ at $\mathcal{U}$ for a given transmit power $P_t$ is:
\begin{align}\label{eq:rx_pow}
	\textstyle P_{r}&=\textstyle {P_{t}}\times\left(\frac{\lambda}{4 \pi}\right)^{2}\left|\frac{1}{D} e^{ (-j k \mathrm{D})}+\frac{\Gamma}{d} e^{ \left(j \theta\right)}e^{\left(-j k d\right)}\right|^{2},\\\label{eq:d_exp}
	\textstyle d&=\textstyle\sqrt{x^{2}+\mathrm{h}^{2}}+\sqrt{\left(D-x\right)^{2}+\mathrm{h}^{2}}
\end{align}
where $d$ is the path length of the reflected signal and $\Gamma$ is the reflection coefficient loss as given by~\eqref{eq:ref_coeff}. Note that the phase shift $\theta$ at the IRS panel is tractable and can be adjusted using its control unit.
\begin{figure}[!t] 
	\centering  \includegraphics[width=3.4in]{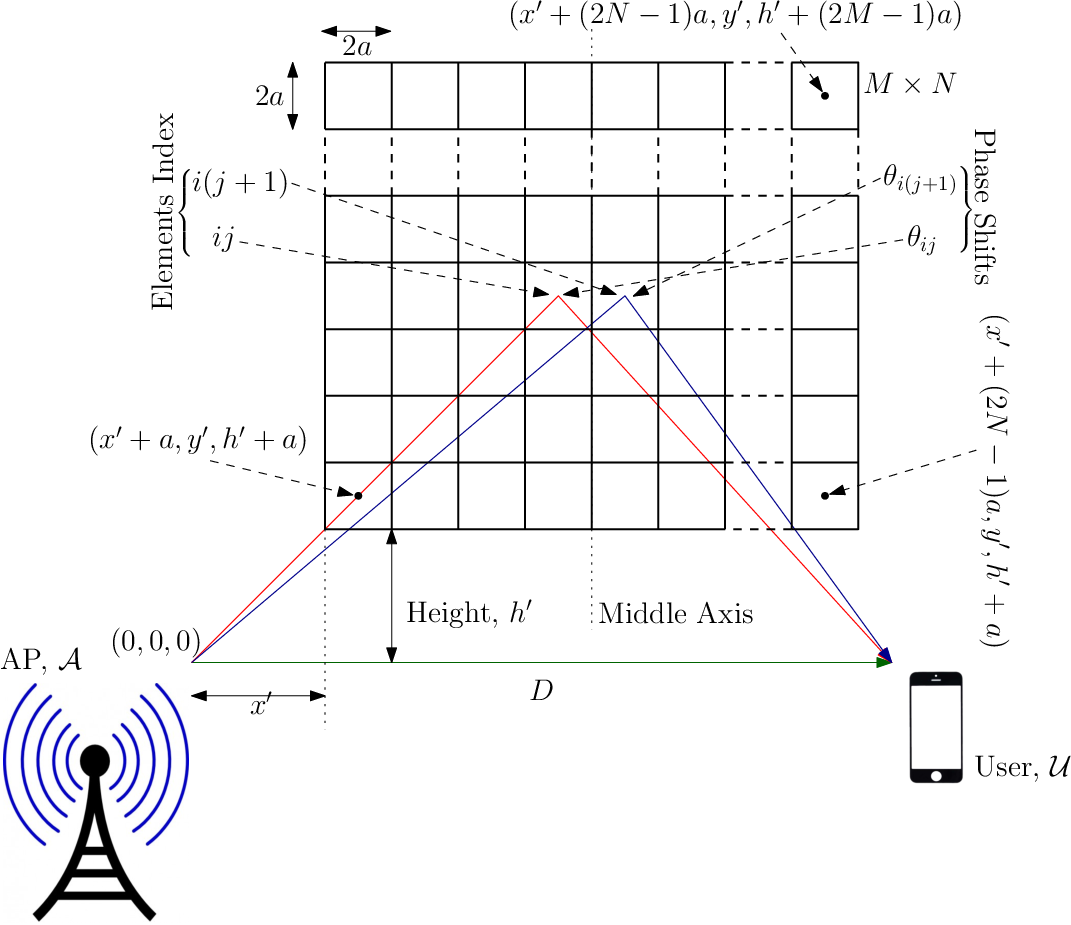}
	\caption{\small IRS-assisted multi-ray communication model.}    
	\label{fig:sys_topo_multi}\vspace{-2mm}
\end{figure}

\subsection{Transmission Model for Received Power at User $\mathcal{U}$ for Multiple IRS Elements}
Now, let us consider the case of multi IRS elements IRS as shown in Fig.~\ref{fig:sys_topo_multi}. Here, in the multi-element IRS-assisted communication as shown in Fig.~\ref{fig:sys_topo_multi}, the downlink transmission from the AP, $\mathcal{A}$ to user, $\mathcal{U}$ takes place through direct as well as the IRS-reflected links. The IRS panel is assumed to be of $M\times N$ elements (size of each square is of length $2a$) represented by the index $\{ij|0\le i \le M-1, 0\le j \le N-1\}$. The phase shift after reflection at $ij$th element is $\theta_{ij}$. Reflection coefficient, $\Gamma$ is considered to be the same at each element due to symmetric characteristics. In order to represent the elements in a three-dimensional Cartesian coordinate system, we consider the origin at the AP, $\mathcal{A}$ as shown in Fig.~\ref{fig:sys_topo_multi}. In general, the coordinate of the center of the $ij$th element is $(x'+(2j+1)a, y',  h'+(2i+1)a)$, where $x'$ is the distance of the leftmost point of the IRS panel from $\mathcal{A}$, $y'$ is distance away from the direct path between $\mathcal{A}$ and $\mathcal{U}$, and $h'$ is the height of the bottom-most point of the panel from the direct path. Note that without loss of generality, in Fig.~\ref{fig:sys_topo_multi}, the distance $y'$ is not shown for the convenience of the descriptive representation.

Using the similar procedure to obtain \eqref{eq:friis}, for the multi-ray communication model in Fig.~\ref{fig:sys_topo}, the received power, $\widetilde{P}_r$ at user $\mathcal{U}$ for given transmit power, $P_t$ from the AP, $\mathcal{A}$ is obtained as
\begin{align}\label{eq:rx_pow_multi}
	\widetilde{P}_r={P_{t}}\left(\frac{\lambda}{4 \pi}\right)^{2}\left|\frac{1}{D} e^{ (-j k \mathrm{D})}+
	\sum_{i=0}^{M-1}\sum_{j=0}^{N-1}\frac{\Gamma}{d_{ij}} e^{ \left(j \theta_{ij}\right)}e^{\left(-j k d_{ij}\right)}\right|^{2},
\end{align} 
where we assume that the end nodes reflection coefficients, $\Gamma_{t}=\Gamma_{r}=0$ and the gains, $G_t=G_r=1$. $\Gamma$ is the constant reflection coefficient loss at each element.  Moreover, the path length, $d_{ij}$ of the reflected signal through $ij$th element can be expressed as
\begin{align}
	\label{eq:d_exp_multi}\nonumber
	d_{ij}=&\sqrt{{(x'+(2j+1)a)}^{2}+y'^2+{(h'+(2i+1)a)}^{2}} \\
	&+\sqrt{{(D-x'-(2j+1)a)}^{2}+y'^2+{(h'+(2i+1)a)}^{2}},
\end{align}
for $i\in\{0,1,\cdots,M-1\}$ and $j\in\{0,1,\cdots,N-1\}$. Note that in~\eqref{eq:rx_pow_multi}, the phase shift due to direct path is $kD$ whereas the reflected component through $ij$th element of the IRS panel encounters a phase shift $kd_{ij}$ and $\theta_{ij}$ due to the path length $d_{ij}$ and the reflection, respectively. 

\subsection{Problem Formulation}\label{sec:prob_form}

\subsubsection{Joint Optimization Problem for Single IRS Element}

In case of reflection from a single IRS element as shown in Fig.~\ref{fig:sys_topo}, to enhance the constructive interference of the two received rays at the user, $\mathcal{U}$, the IRS panel location, $x$ and reflection phase shift, $\theta$ at the panel can be regulated. Therefore, the received power, $P_r$ in~\eqref{eq:rx_pow} under the constraints of underlying variables can be expressed as:
\begin{eqnarray}
	\begin{aligned}
		\text{(P0):}&\hspace{1mm} \textstyle\underset{x,\;\theta}{\text{maximize}}
		\quad P_r \\
		& \hspace{1mm} \text{subject to}\quad C1\!: \textstyle 0\le x \le D,\;\;\;  C2\!: \textstyle 0\le \theta \le 2\pi, \nonumber
	\end{aligned}
\end{eqnarray}
where the constraints $C1$ and $C2$ provide the boundary conditions on the variables $x$ and $\theta$. As objective function $P_r$ is nonconvex in $x$ and $\theta$, the joint optimization problem (P0) is nonconvex. 

In order to realize the solution for the problem (P0), objective function, $P_r$ can be further simplified  using Lemma~\ref{lemma1}.
\begin{lemma}\label{lemma1}
	The received power, $P_r$ in~\eqref{eq:rx_pow} can be expressed in more tractable form as:
	\begin{align}\label{eq:Pr_trac}
		\textstyle P_r=\textstyle {P_{t}}\left( \frac{\lambda}{4\pi} \right) ^2\left( \frac{1}{D^2}+\frac{\Gamma ^2}{d^2}+\frac{2\Gamma}{dD}\cos \left( kd-\theta -kD \right) \,\, \right)
	\end{align}
\end{lemma}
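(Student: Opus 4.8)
The plan is to expand the squared modulus in~\eqref{eq:rx_pow} directly, using the elementary identity $|z_1 + z_2|^2 = |z_1|^2 + |z_2|^2 + 2\,\mathrm{Re}(z_1 z_2^{*})$ valid for any complex numbers $z_1, z_2$. I would set $z_1 = \frac{1}{D}e^{-jkD}$ for the direct path and $z_2 = \frac{\Gamma}{d}e^{j\theta}e^{-jkd}$ for the IRS-reflected path, and note from~\eqref{eq:ref_coeff} that $\Gamma$ is real and nonnegative, so that $|z_2| = \Gamma/d$ and $z_2^{*} = \frac{\Gamma}{d}e^{-j\theta}e^{jkd}$.

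The first step is to compute the two magnitude terms, $|z_1|^2 = 1/D^2$ and $|z_2|^2 = \Gamma^2/d^2$, which reproduce the first two summands inside the bracket of~\eqref{eq:Pr_trac}. The second step is to evaluate the cross term $z_1 z_2^{*} = \frac{\Gamma}{dD}\,e^{j(kd - \theta - kD)}$, whose real part is $\frac{\Gamma}{dD}\cos(kd - \theta - kD)$; doubling it gives the last summand. Collecting all three contributions and restoring the common prefactor $P_t\bigl(\tfrac{\lambda}{4\pi}\bigr)^2$ yields exactly~\eqref{eq:Pr_trac}.

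Since the argument is purely algebraic manipulation of complex exponentials, there is no genuine obstacle; the only point needing slight care is the sign and ordering of the phase inside the cosine, which comes from conjugating $z_2$ rather than $z_1$ — an interchange would merely flip the overall sign of the cosine argument, which is immaterial by evenness of $\cos$. I would also observe that $d$ is given explicitly by~\eqref{eq:d_exp} as a function of $x$ and $h$, so~\eqref{eq:Pr_trac} isolates the entire dependence of $P_r$ on the optimization variables $x$ and $\theta$ in the single cosine factor, which is precisely what makes the subsequent phase-shift and location optimization tractable.
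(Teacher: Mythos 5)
Your proposal is correct and follows essentially the same route as the paper's Appendix~A: both expand the squared modulus of the two-term sum into $|z_1|^2+|z_2|^2$ plus a cross term equal to $\frac{2\Gamma}{dD}\cos(kd-\theta-kD)$. The paper reaches the cross term by separating real and imaginary parts and applying $\cos A\cos B+\sin A\sin B=\cos(A-B)$, whereas you use $2\,\mathrm{Re}(z_1 z_2^{*})$ directly; these are the same computation, yours merely more compact.
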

\begin{proof}
	The proof is discoursed in Appendix~\ref{App_A}.
\end{proof}

\subsubsection{Joint Optimization Problem for Multiple IRS Elements}

As shown in Fig.~\ref{fig:sys_topo_multi}, let us consider for the case of multiple IRS elements. In order to maximize $\widetilde{P}_r$ in~\eqref{eq:rx_pow_multi} by jointly optimizing the phase shifts, $\{\theta_{ij}\}$ and IRS location, $x'$, we need to express the equation in a more tractable form using Lemma~\ref{lemma2} as described below.

\begin{lemma}\label{lemma2}
	The received power, $\widetilde{P}_r$ in~\eqref{eq:rx_pow_multi} can be further simplified into the expression as shown in~\eqref{eq:Pr_trac_multi} at the top of next page.
\end{lemma}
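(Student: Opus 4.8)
The plan is to expand the squared modulus in~\eqref{eq:rx_pow_multi} exactly as was done for the single-element case in Lemma~\ref{lemma1}, but now tracking the cross terms between every pair of contributing paths. First I would write the bracketed quantity as a sum of $MN+1$ complex phasors: the direct-path term $\frac{1}{D}e^{-jkD}$ together with the $MN$ reflected-path terms $\frac{\Gamma}{d_{ij}}e^{j\theta_{ij}}e^{-jkd_{ij}}$. Taking $|\,\cdot\,|^2 = (\,\cdot\,)(\,\cdot\,)^{*}$ and collecting terms gives three groups: (i) the direct-path self-term $\frac{1}{D^2}$; (ii) the reflected self-terms $\sum_{i,j}\frac{\Gamma^2}{d_{ij}^2}$; (iii) the direct--reflected cross terms, which pair up to $\sum_{i,j}\frac{2\Gamma}{d_{ij}D}\cos\!\left(kd_{ij}-\theta_{ij}-kD\right)$ using $e^{j\phi}+e^{-j\phi}=2\cos\phi$; and (iv) the reflected--reflected cross terms between distinct elements $ij$ and $i'j'$, which similarly collapse to $\sum\sum \frac{2\Gamma^2}{d_{ij}d_{i'j'}}\cos\!\left(k(d_{ij}-d_{i'j'})-(\theta_{ij}-\theta_{i'j'})\right)$ once the symmetric pairs are combined. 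Multiplying through by $P_t\left(\frac{\lambda}{4\pi}\right)^2$ then yields the claimed closed form in~\eqref{eq:Pr_trac_multi}.

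The key bookkeeping step is organizing the double sum over ordered pairs $(ij,i'j')$ into a sum over unordered pairs so that the complex conjugate pair $\frac{\Gamma}{d_{ij}}e^{j(\theta_{ij}-kd_{ij})}\cdot\frac{\Gamma}{d_{i'j'}}e^{-j(\theta_{i'j'}-kd_{i'j'})}$ and its conjugate add to a single real cosine. I would be explicit that the diagonal terms $i=i',j=j'$ are exactly the reflected self-terms in group (ii), so the off-diagonal part is what produces group (iv). It is convenient to reindex the four indices $(i,j,i',j')$ and state the ranges carefully to avoid double-counting; one clean way is to keep the full ordered double sum and simply note it is real, i.e. $\sum_{i,j}\sum_{i',j'} \frac{\Gamma^2}{d_{ij}d_{i'j'}} e^{j[k(d_{i'j'}-d_{ij})-(\theta_{i'j'}-\theta_{ij})]}$ is its own conjugate under swapping the primed and unprimed indices, hence equals $\sum_{i,j}\sum_{i',j'}\frac{\Gamma^2}{d_{ij}d_{i'j'}}\cos\!\left(k(d_{ij}-d_{i'j'})-(\theta_{ij}-\theta_{i'j'})\right)$, which contains the diagonal $\sum_{i,j}\Gamma^2/d_{ij}^2$ as the $\phi=0$ case. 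Whichever grouping~\eqref{eq:Pr_trac_multi} uses, I would match the algebra to that convention.

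The main obstacle is purely notational rather than mathematical: there is no inequality, convexity, or limiting argument here, only an exact algebraic identity, so the difficulty lies entirely in handling the nested four-index summation without sign or index errors and in presenting it in the same grouped form as the statement of~\eqref{eq:Pr_trac_multi}. I would therefore structure the proof of Lemma~\ref{lemma2} to mirror the proof of Lemma~\ref{lemma1} step for step (which is deferred to Appendix~\ref{App_A}), first doing the direct-versus-single-reflection cross term exactly as before and then appending the new reflection-versus-reflection contribution, so the reader sees the multi-element expression as a direct generalization. The $d_{ij}$ themselves are left as the explicit expressions in~\eqref{eq:d_exp_multi}; no simplification of those radicals is attempted at this stage since that is deferred to the subsequent optimization sections.
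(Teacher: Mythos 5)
Your proposal is correct, and it reaches~\eqref{eq:Pr_trac_multi} by a route that differs in organization from the paper's. The paper's proof (Appendix~\ref{App_B}) proceeds by induction on the number of reflecting elements: it writes out the received power explicitly for one element (recovering the Lemma~\ref{lemma1} form with $d_{ij}$, $\theta_{ij}$) and for two elements (which exposes the first reflection--reflection interference cosine), and then asserts that the general expression follows ``recursively.'' You instead expand the full $(MN+1)$-phasor modulus squared in one shot, grouping the result into the direct self-term, the reflected self-terms, the direct--reflected cross terms, and the reflected--reflected cross terms, using $z\bar{w}+\bar{z}w=2\,\mathrm{Re}(z\bar{w})$ to collapse conjugate pairs into cosines. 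Your version is arguably the more complete argument, since the paper's inductive step from $n$ to $n+1$ elements is never actually written down, whereas your direct expansion is exact for arbitrary $M,N$; the paper's version buys a more transparent link to Lemma~\ref{lemma1} through its explicit small cases. You also correctly isolate the one genuine pitfall, namely whether the interference double sum in~\eqref{eq:Pr_trac_multi} runs over ordered or unordered pairs $(ij,mn)$: as written, the sum over all ordered pairs with $(m,n)\neq(i,j)$ combined with the prefactor $2\Gamma^2/(d_{ij}d_{mn})$ would count each unordered pair twice relative to the two-element expression in the appendix, so the convention must be fixed (unordered pairs with coefficient $2$, or ordered pairs with coefficient $1$); your stated intention to match the algebra to whichever convention the displayed formula uses is exactly the right resolution. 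No mathematical gap remains.
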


From~\eqref{eq:Pr_trac_multi}, it is to be noted that the amount of the received power, $\widetilde{P}_r$ highly depends on the cosine factors which leads to the constructive and destructive interference at user, $\mathcal{U}$. As these factors are function of path lengths, $\{d_{ij}\}$ and phase shifts, $\{\theta_{ij}\}$, therefore these factors need to be optimized in order to enhance $\widetilde{P}_r$. Note that from~\eqref{eq:d_exp_multi}, $\{d_{ij}\}$ depend on $x'$, therefore, the corresponding optimization problem can be formulated as
\begin{eqnarray}
	\begin{aligned}
		\text{(P1):}&\hspace{1mm} \underset{x',\;\{\theta_{ij}\}}{\text{maximize}}
		\quad \widetilde{P}_r \\
		& \hspace{1mm} \text{subject to}\quad  \widetilde{C}1\!:  0\le x' \le D,\;\;\;  \widetilde{C}2\!: 0\le \theta_{ij} \le 2\pi, \nonumber
	\end{aligned}
\end{eqnarray}
where $\widetilde{C}1$ and $\widetilde{C}2$ are the given bounds for the location and phase shifts, respectively. From the investigation of Hessian matrix of $\widetilde{P}_r$ in $x'$ and $\{\theta_{ij}\}$, we find that $\widetilde{P}_r$ is jointly nonconvex which implies that the problem, (P1) is nonconvex. Next, in the solution methodology section, we obtain the sub-optimal solution for (P1) and its global optimality is realized using the obtained numerical results in the numerical section.
\begin{figure*}[!t]
\begin{equation}\label{eq:Pr_trac_multi}
	\widetilde{P}_r\hspace{-1mm}={P_{t}}\left(\frac{\lambda}{4\pi}\right) ^2\hspace{-1mm}\left(\hspace{-1mm} \frac{1}{D^2}\hspace{-0.5mm}+\hspace{-1mm}\sum_{i,j}\frac{\Gamma ^2}{d_{ij}^2}+\hspace{-1mm}\sum_{i,j}\frac{2\Gamma}{Dd_{ij}}\cos \left( kd_{ij}\hspace{-1mm}-\hspace{-1mm}\theta_{ij}\hspace{-1mm}-\hspace{-1mm}kD \right) +\hspace{-1mm}\left.\sum_{i,j}\sum_{\substack{m,n\\\sim(m=i,n=j)}}\frac{2\Gamma^2}{d_{ij}d_{mn}}\right.\cos \left( \theta_{ij}\hspace{-1mm} -kd_{ij}\hspace{-1mm} - \theta_{mn}\hspace{-1mm}  + kd_{mn} \right)\hspace{-1mm} \right).
\end{equation}
	\hrulefill
\end{figure*}
\begin{proof}
	See Appendix~\ref{App_B}.
\end{proof}

\section{Proposed Solution Methodology}\label{sec:sol_methodology}

As described in Section~\ref{sec:prob_form}, it is observed that both the problems (P0) and (P1) are nonconvex. Here, in this section, we first consider the case of single IRS element where we obtain the jointly optimal solution for problem (P0) followed by solution methodology for semi-adaptive schemes to optimize the individual variable while keeping other fixed. Next, we consider the case of multi-element IRS, where we realize its jointly optimal solution by first optimizing the phase shifts at the IRS elements followed by the optimization of IRS location at the obtained optimal phase shifts.

\subsection{Jointly Optimal IRS Location and Phase Shift for Single Element IRS}\label{sec:joint_loc_phase}

As the problem (P0) is jointly nonconvex, to determine its feasible optimal solution, we investigate the variation of the objective function, $P_r$ with reflected path length, $d$ and phase shift, $\theta$. Note that one of the underlying variable of the problem (P0) is location, $x$ instead of $d$, but to reduce complexity, initially, we find the optimal path length, $d^*$, then, using~\eqref{eq:d_exp}, the optimal location, $x^*$ is determined.  If we examine the expression of $P_r$ in~\eqref{eq:Pr_trac}, $\theta$ is inside the cosine factor. Therefore, to enhance $P_r$, $\theta$ can be set as a function of $d$ such that $(kd-\theta-kD)=2m\pi$ which gives $\cos(kd-\theta-kD)=1$, where $m\in\mathbb{Z}$ is an integer. Now, the remaining term in~\eqref{eq:Pr_trac} is represented as in~\eqref{eq:exp_Pr_hat}, is a function of $d$ and its first and second derivatives are expressed below in~\eqref{eq:first_Pr_hat} and \eqref{eq:second_Pr_hat}.
\begin{subequations}\label{eq:derivatives}
	\begin{align}\label{eq:exp_Pr_hat}
		\textstyle \widehat{P}_r&=\textstyle {P_{t}}(\lambda/4\pi) ^2\left( \frac{1}{D^2}+\frac{\Gamma ^2}{d^2}+\frac{2\Gamma}{dD}\right)\\\label{eq:first_Pr_hat}
		\textstyle \frac{\partial \widehat{P}_r}{\partial d}&=\textstyle {P_{t}}\left( \frac{\lambda}{4\pi} \right) ^2\left(-2\frac{\Gamma ^2}{d^3}-2\frac{\Gamma }{Dd^2}\right)<0\\\label{eq:second_Pr_hat}
		\textstyle \frac{\partial^2\widehat{P}_r}{\partial d^2}&=\textstyle {P_{t}}\left( \frac{\lambda}{4\pi} \right) ^2\left(6\frac{\Gamma  ^2}{d^4}+4\frac{\Gamma}{Dd^4}\right)>0.
	\end{align} 
\end{subequations}
From~\eqref{eq:derivatives}, the first derivative in $d$ is always negative, whereas the second derivative is always positive. Thus, $\widehat{P}_r$ is strictly convex in $d$, but it always decreases with $d$. Therefore, $\widehat{P}_r$ achieves its maximum value at the minimum value of $d$. As $d$ in~\eqref{eq:d_exp} is a function of $x$, its minimum value, $d_{\min}$ is obtained as: $\frac{\partial d}{\partial x}=\frac{x}{\sqrt{h^{2}+x^{2}}}-\frac{D-x}{\sqrt{(D-x)^{2}+h^{2}}}=0$ which gives optimal IRS location, $x^*=D/2$, after substituting it in $d$, we get $d_{\min}=2\sqrt{h^2+\frac{D^2}{4}}$. To determine the optimal phase shift, $\theta^*$, we substitute, $d_{\min}$ in $(kd-\theta-kD)=2m\pi$, it gives $\theta^*=k\Big( 2\sqrt{h^2+\frac{D^2}{4}}-D\Big)+2m\pi$. But, under the constraint of $C2$, we take $m=0$ which gives $\theta^*$ as represented below in~\eqref{eq:joint_opt_theta}. Thus, the jointly optimal solution, $(x^*,\theta^*)$ can be expressed as: 
\begin{subequations}
	\begin{align}\label{eq:joint_opt_x}
		\textstyle x^*&=\textstyle \frac{D}{2},\\\label{eq:joint_opt_theta}
		\textstyle \theta^*&=\textstyle k\left( 2\sqrt{h^2+\frac{D^2}{4}}-D\right).
	\end{align}
\end{subequations}
Next, using the semi-adaptive schemes for individual optimization of $x$ and $\theta$ as described below, we get more insights on the global optimality of the obtained optimal solution.

\subsection{The Optimal Location of Single-Element IRS}\label{sec:opt_loc}

In this semi-adaptive scheme, we optimize the location, $x$ of IRS panel at a fixed phase shift, $\theta$ to maximize, $P_r$ and its optimization formulation can be expressed as:
\begin{eqnarray}\label{eq:prob_P2}\nonumber
	\begin{aligned}
		\text{(P2):}&\hspace{1mm}\underset{x}{\text{maximize}} \quad P_{r}\\
		&\hspace{1mm}\text {subject to: }\hspace{1.5mm} C1\!: 0 \leqslant x \leqslant D
	\end{aligned}
\end{eqnarray}
Again, using the same procedure as in Section~\ref{sec:joint_loc_phase}, to reduce the complexity, first we determine the optimal path length, $d^*$, thereafter, using~\eqref{eq:d_exp}, we determine the optimal location, $x^*$. As the problem (P2) is nonconvex, to realize its optimal solution, we find the roots of $\frac{\partial P_r}{\partial d}=0$ as given by
\begin{align}\label{eq:Pr_der_d}
	\textstyle \frac{\partial P_r}{\partial d}=\textstyle \left( \frac{\lambda}{4\pi} \right) ^2\frac{2\Gamma}{dD}\sin \left( kd-\theta -kD \right)=0.
\end{align}
As~\eqref{eq:Pr_der_d} has a trigonometric factor, $\sin(kd-\theta -kD)$ which gives infinite many roots, therefore, to realize the roots in more tractable form and to confine the feasible roots in finite range, we use polynomial curve fitting for the received power, $P_r$ in~\eqref{eq:Pr_trac} under given range of underlying variable. In the expression of $P_r$, the factor, $\cos(kd-\theta -kD)$ needs to be transformed using polynomial curve fitting to completely represent $P_r$ in polynomial form. For the domain $z\in[-\pi/2, \pi/2]$, the polynomial expression of $\cos(z)$ using curve fitting is:
\begin{align}\label{eq:cos_poly}
	\textstyle \cos \left( z \right) &\approx \textstyle 0.0259z^4-0.4507z^2+0.9772.
\end{align} 
Substituting $z=kd-\theta -kD$ in~\eqref{eq:cos_poly}, $\cos(kd-\theta -kD)$ can be expressed as:
\begin{align}\label{eq:cos_poly_act}
	\textstyle \cos(kd-\theta-kD)=\textstyle Md^4+Nd^3+Pd^2+Qd+R,
\end{align}
where $M$, $N$, $P$, $Q$, and $R$ are expressed below.
\begin{align}\nonumber
	\textstyle M &=\textstyle 0.0259k^4\\\nonumber
	\textstyle N &=\textstyle 0.0259(-2k^3\theta-2k^4\theta-2k^3\theta-2k^4D)\\\nonumber
	\textstyle P &=\textstyle 0.0259(k^2\theta^2+k^4D^2+2\theta k^3D+4k^2\theta^2+8k^3\theta D\\\nonumber
	& \textstyle +4k^4D^2+\theta^2k^2+k^4D^2-2k^3\theta D^2 + 2 \theta k^3-0.4507k^2\\\nonumber
	\textstyle Q &=\textstyle 0.0259(-2k\theta^3 -2k^3\theta D^2 -4k^2\theta^2 D -2k^2D\theta^2\\\nonumber
	&\textstyle -2k^4D^3-4\theta k^3D^2 -2k \theta^3 -2k^2 \theta^2 D -2k^4D^3\\\nonumber
	&\textstyle -4 k^2\theta^2 D-4\theta k^3D^2)-0.4507(-2k\theta -2k^2D)\\\nonumber
	\textstyle R &=\textstyle 0.0259(\theta^4 + \theta^2k^2D^2 +2\theta^3kD +k^2D^2 \theta^2+k^4D^4 \\\nonumber
	& \textstyle + 2\theta k^3D^3 + 2\theta^3 kD + 2 \theta k^3 D^3+ 4 \theta^2 k^2 D^2) + 0.9772
\end{align}
Note that for given $d$, $\theta$, $k$, and $D$, if $z<-\pi/2$ or $z>\pi/2$, then we perform $z+2m\pi$ or $z-2m\pi$ respectively to fetch it in the range, $[-\pi/2,\pi/2]$. After substitution of cosine factor in~\eqref{eq:cos_poly_act} to~\eqref{eq:Pr_trac}, the received power represented as $\overline{P}_r$, the optimal solution of (P1) can be realized using equation $\frac{\partial \overline{P}_r}{\partial d}=0$ as given by
\begin{align}\label{eq:Pr_poly}
	\textstyle \frac{\partial \overline{P}_r}{\partial d}=\textstyle U d^5+ V d^4 + W d^3 + X d + Y=0
\end{align} 
where $U$, $V$, $W$, $X$, and $Y$ are expressed below.
\begin{align}\nonumber
	\textstyle U = \textstyle\frac{6 \Gamma M}{D},
	V =\frac{4 \Gamma N}{D},
	W =\frac{2 \Gamma P}{D},
	X =\frac{-2 \Gamma R}{D},
	Y =-2 \Gamma^2
\end{align}
In contrast,~\eqref{eq:Pr_poly} has finite (five) roots which can be realized readily and some of the roots are eliminated as they must be real and should satisfy $d>0$. In case of more than one feasible solution, the one is set to globally optimal solution, $d^*$ at which $\overline{P}_r$ achieves its maximum value. Further, from~\eqref{eq:d_exp}, for obtained $d^*$, two values of optimal location, $x^*$ denoted as $\{x_1,x_2\}$ are obtained which are symmetric about $x=D/2$. As both gives the same maximum value of $P_r$, one of them is set as optimal location as expressed as: $x^*=\{x_1 \text{ or } x_2|d=d^*\}$. Moreover, the discussion is illustrated in the numerical section using the obtained results. 

\subsection{The Optimal Phase Shift for Single-Element IRS}\label{sec:opt_phase}
In this semi-adaptive scheme, the received power, $P_r$ is maximized by optimizing the phase shift, $\theta$ at the IRS-panel while keeping its location, $x$ at a fixed value. The optimization problem for the scheme is given by:
\begin{eqnarray}\label{eq:prob_P3}\nonumber
	\begin{aligned}
		\text{(P3):}&\hspace{1mm}\underset{\theta}{\text{maximize}} \quad P_{r}\\
		&\hspace{1mm}\text {subject to: }\hspace{1.5mm} C2\!: 0<\theta<2\pi
	\end{aligned}
\end{eqnarray}
Again, the problem (P3) is nonconvex, but it is convex in the restricted range of, $\theta$.  To realize its feasible globally optimal solution, its first derivative, $\frac{\partial P_r}{\partial \theta}=0$ and second derivative, $\frac{\partial^2 P_r}{\partial \theta^2}$ can be investigated as below.
\begin{align}\label{eq:Pr_first_der_theta}
	\textstyle \frac{\partial P_r}{\partial \theta} &= \textstyle \left( \frac{\lambda}{4\pi} \right) ^2\frac{2\Gamma}{dD}\sin \left( kd-\theta -kD \right)=0\\\label{eq:Pr_sencond_der_theta}
	\textstyle \frac{\partial^2 P_r}{\partial \theta^2} &=\textstyle -\left( \frac{\lambda}{4\pi} \right) ^2\frac{2\Gamma}{dD}\cos \left( kd-\theta -kD \right)
\end{align}
From~\eqref{eq:Pr_sencond_der_theta}, $\frac{\partial^2 P_r}{\partial \theta^2}<0$ for $(kd-\theta -kD)\in\big((4m-1)\frac{\pi}{2}+k(d-D), (4m+1)\frac{\pi}{2}+k(d-D)\big)$; $m\in\mathbb{Z}$. Therefore, in the given domain, $P_r$ is concave in $\theta$ which implies (P3) is convex and using~\eqref{eq:Pr_first_der_theta}, globally optimal solution is: 
\begin{align}
	\theta^*=k(d-D).
\end{align}
\subsection{Optimal Phase Shifts for Multi-Elements IRS}
The received power $\widetilde{P}_r$ in~\eqref{eq:rx_pow_multi} can be enhanced by choosing the phase shifts, $\{\theta_{ij}\}$ for a given IRS location, $x'$ such that the cosine factors achieve its maximum value $1$ which gives the following relationship.
\begin{subequations}
	\begin{align}\label{eq:theta_a}
		&(kd_{ij}-\theta_{ij}-kD) = 2p\pi,\\\label{eq:theta_b}
		&(\theta_{ij}-kd_{ij}-\theta_{mn} + kd_{mn}) =2q\pi,
	\end{align}
\end{subequations} 
where $p,q\in\mathbb{Z}$ are integers. From~\eqref{eq:theta_a}, the phase shift can be calculated as, $\theta_{ij}=kd_{ij}-kD-2p\pi$. On further substituting it in~\eqref{eq:theta_b}, we get, $\theta_{mn}=kd_{mn}-kD-2(p+q)\pi$. As we observe that phase shifts, $\theta_{ij}$ and $\theta_{mn}$ are in the same form, therefore, under the given constraint $\widetilde{C}1$ in (P1) the optimal phase shift, $\tilde{\theta}_{ij}$ is given by
\begin{align}\label{eq:opt_phase}
	\tilde{\theta}_{ij}=kd_{ij}-kD+2z\pi\;(\text{mod}\; 2\pi),
\end{align}
where $z\in\mathbb{Z}$. Now, after substituting $\{\tilde{\theta}_{ij}\}$ in~\eqref{eq:Pr_trac_multi}, the received power, $\widehat{\widetilde{P}}_r$ at the optimal phase shift can be expressed as
\begin{align}\label{eq:rx_power_opt_multi}
	\widehat{\widetilde{P}}_r\hspace{-1mm}=&{P_{t}}\left(\hspace{-1mm} \frac{\lambda}{4\pi}\hspace{-1mm} \right)^2\hspace{-1mm}\left(\hspace{-1mm} \frac{1}{D^2}\hspace{-1mm}+\hspace{-1mm}\sum_{i,j}\frac{\Gamma ^2}{d_{ij}^2}\hspace{-1mm}+\hspace{-1mm}\sum_{i,j}\hspace{-0.5mm}\frac{2\Gamma}{d_{ij}D}\hspace{-1mm}+\hspace{-1mm} \sum_{i,j}\hspace{-3mm}\sum_{\substack{m,n\\\sim(m=i,n=j)}}\hspace{-3mm}\frac{2\Gamma^2}{d_{ij}d_{mn}} \hspace{-1mm} \right).
\end{align}
Next, using~\eqref{eq:rx_power_opt_multi}, we find the optimal location for multiple IRS elements at the obtained optimal phase shifts.

\subsection{Optimal IRS Location for Multi-Elements IRS}
To determine the globally optimal IRS location, $x'^*$ at the optimal phase shifts, $\{\tilde{\theta}_{ij}\}$, here we investigate the variation of $\widehat{\widetilde{P}}_r$ in path lengths, $\{d_{ij}\}$ using Lemma~\ref{lemma3}.
\begin{lemma}\label{lemma3}
	The received power, $\widehat{\widetilde{P}}_r$ in~\eqref{eq:rx_power_opt_multi} is jointly unimodal in $\{d_{ij}\}$ and achieves its globally maximum value at jointly minimum value of $\{d_{ij}\}$.
\end{lemma}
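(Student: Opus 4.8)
The plan is to show that $\widehat{\widetilde{P}}_r$, viewed as a function of the vector of path lengths $\{d_{ij}\}$ treated as independent positive variables, is strictly decreasing in each $d_{ij}$, and hence attains its maximum precisely when all $d_{ij}$ are simultaneously minimized; the geometric fact that the $d_{ij}$ can indeed all be made minimal at once (via the symmetry point $x' = D/2$ as in Section~\ref{sec:opt_loc}) then pins down $x'^*$. The starting point is the tractable expression~\eqref{eq:rx_power_opt_multi}, in which every term is manifestly a positive, monotonically decreasing function of the $d_{ij}$'s: the self terms $\Gamma^2/d_{ij}^2$ and $2\Gamma/(d_{ij}D)$, and the cross terms $2\Gamma^2/(d_{ij}d_{mn})$. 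Since all these terms carry positive coefficients ($\Gamma > 0$, $D > 0$), and since after the optimal-phase-shift substitution no oscillatory cosine factor remains, there is no competition between terms.

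Concretely, I would compute the partial derivative of $\widehat{\widetilde{P}}_r$ with respect to a fixed $d_{ij}$. Differentiating~\eqref{eq:rx_power_opt_multi}, the only surviving contributions are
\[
\frac{\partial \widehat{\widetilde{P}}_r}{\partial d_{ij}} = P_t\left(\frac{\lambda}{4\pi}\right)^2\left( -\frac{2\Gamma^2}{d_{ij}^3} - \frac{2\Gamma}{d_{ij}^2 D} - \sum_{\substack{m,n\\\sim(m=i,n=j)}}\frac{2\Gamma^2}{d_{ij}^2 d_{mn}} \right) < 0,
\]
which is strictly negative for all admissible positive path lengths. This establishes that $\widehat{\widetilde{P}}_r$ is strictly monotonically decreasing along every coordinate direction $d_{ij}$, so it is jointly unimodal (indeed quasi-concave with no interior stationary point) and its supremum over any box is achieved only at the corner where each $d_{ij}$ is smallest. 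I would also note, for completeness, the second derivatives $\partial^2 \widehat{\widetilde{P}}_r/\partial d_{ij}^2 > 0$ and the structure of the off-diagonal entries, mirroring~\eqref{eq:derivatives}, to confirm the convexity picture is consistent with a boundary minimizer of each $d_{ij}$.

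Finally I would close the geometric loop: from~\eqref{eq:d_exp_multi}, each $d_{ij}$ is a sum of two Euclidean distances from the $ij$th element to $\mathcal{A}$ and to $\mathcal{U}$, and for fixed $y'$, $h'$, $a$ the minimizer in $x'$ of each such sum is the same — the element is centered on the perpendicular bisector, i.e. the configuration where $x' + (2j+1)a$ balances $D - x' - (2j+1)a$. Because the panel is rigid, a single choice of $x'$ cannot exactly center every column $j$; however, the lemma as stated only asserts that $\widehat{\widetilde{P}}_r$ is maximized at the \emph{jointly} minimal $\{d_{ij}\}$, which is the content delivered by the monotonicity argument above, with the realizability of that joint minimum (and the resulting $x'^*$) deferred to the subsequent location-optimization step. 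The main obstacle is precisely this last point — arguing cleanly that "jointly unimodal in $\{d_{ij}\}$" is the right abstraction even though the $\{d_{ij}\}$ are not free but coupled through the single scalar $x'$; I would handle it by stating the monotonicity in the free variables first and then invoking that $d(x')$ is itself unimodal in $x'$ (as already shown for the single-element case), so the composition is unimodal in $x'$ with minimizer near $x' = D/2$.
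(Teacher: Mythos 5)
Your proof takes essentially the same route as the paper: you differentiate~\eqref{eq:rx_power_opt_multi} with respect to each $d_{ij}$, observe that every term contributes a strictly negative amount (the paper packages this as the gradient vector $\nabla_d \widehat{\widetilde{P}}_r \prec \mathbf{0}$), and conclude that the function is strictly decreasing in each path length and hence maximized at their joint minimum. Your added remarks on whether the joint minimum is realizable under the coupling through the single scalar $x'$ correctly anticipate what the paper defers to Lemma~\ref{lemma4}.
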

\begin{proof}
	If we consider the gradient of $\widehat{\widetilde{P}}_r$ in the reflected path lengths, $\{d_{ij}\}$, then, we have
	\begin{align}\label{eq:grad_d}
		\nabla_d \widehat{\widetilde{P}}_r=\left[-\frac{2}{d_{ij}^3}\left(\Gamma+\frac{d_{ij}}{D}+\sum_{\substack{m,n\\\sim(m=i,n=j)}}\frac{\Gamma d_{ij}}{d_{mn}}\right)\right]_{i=0,j=0}^{M-1,N-1}\hspace{-5mm}\prec\mathbf{0},
	\end{align}
	where $\nabla_d$ is the gradient of the function in path lengths, $\{d_{ij}\}$. Here, it can be noticed that the elements in the vector of length $M\hspace{0.1mm}N$ in~\eqref{eq:grad_d} are $<0$. It implies that $\widehat{\widetilde{P}}_r$ is jointly unimodal and strictly decreasing in $\{d_{ij}\}$. Thus, $\widehat{\widetilde{P}}_r$ gets its globally maximum value at jointly minimum value of $\{d_{ij}\}$.
\end{proof}	
The jointly minimum value of $\{d_{ij}\}$ in location, $x'$ can be realized by minimizing the maximum value of $\{d_{ij}\}$ under the constraint of the underlying variable $x'$ which is formulated as	
\begin{eqnarray}
	\begin{aligned}
		\text{(P4):}&\hspace{1mm} \underset{x'}{\text{min}}\;\underset{i,j}{\text{max}}
		\;\; \{d_{ij}\} \\
		& \hspace{1mm} \text{subject to}\quad \widetilde{C}1 \nonumber.
	\end{aligned}
\end{eqnarray}
The minimax problem (P4) can be equivalently expressed as
\begin{eqnarray}
	\begin{aligned}
		\text{(P5):}&\hspace{1mm} \underset{x',t}{\text{min}}\;\; t \\
		& \hspace{1mm} \text{subject to}\quad \widetilde{C}1, \widetilde{C}3: t\ge d_{ij}\; \forall i,j \nonumber,
	\end{aligned}
\end{eqnarray}
where the constraint, $\widetilde{C}3$ ensures that $t$ is always greater than the path lengths, $\{d_{ij}\}$. In order to realize its optimal solution, its Lagrangian function $\mathcal{L}$ is given by
\begin{align}\label{eq:lagrangian}
	\mathcal{L}(t,x',\lambda_{ij})=t+\sum_{i,j}\lambda_{ij}(d_{ij}-t),
\end{align}
where the constraint, $\widetilde{C}1$ is considered as implicit and $\lambda_{ij}$ is the Lagrangian multiplier to the constraint, $\widetilde{C}3$. Using~\eqref{eq:lagrangian}, Karush–Kuhn–Tucker (KKT) conditions can be expressed as
\begin{subequations}\label{eq:KKT_cond}
	\begin{align}\label{eq:KKT_1}
		&\frac{\partial \mathcal{L}}{\partial t}=1-\sum_{i,j}\lambda_{ij}=0\Rightarrow \sum_{i,j}\lambda_{ij}=1,\\
		&\frac{\partial \mathcal{L}}{\partial x}=\sum_{i,j}\lambda_{ij}\frac{\partial d_{ij}}{\partial x}=0,\\\label{eq:KKT_3}
		&\lambda_{ij}(d_{ij}-t)=0, \forall i,j.
	\end{align}
\end{subequations}
Now from~\eqref{eq:KKT_cond}, the optimal location, $x'^*$ can be obtained using Lemma~\ref{lemma4} as described below.
\begin{lemma}\label{lemma4}
	Using the KKT conditions in~\eqref{eq:KKT_cond}, the optimal location, $x'^*$ for (P5) is given by
	\begin{align}\label{eq:glo_opt_loc}
		x'^*=\frac{D}{2}-(N-1)a.
	\end{align}
\end{lemma}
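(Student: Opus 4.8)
The plan is to exploit the fact that, although (P5) arises from a minimax, it is an ordinary convex program: by \eqref{eq:d_exp_multi} each map $x'\mapsto d_{ij}$ is a sum of two functions of the form $\sqrt{(\text{affine in }x')^{2}+\text{const}}$ and is therefore convex, so the constraints $\widetilde{C}3$ and the box $\widetilde{C}1$ are convex; since a strictly feasible point obviously exists, the KKT system \eqref{eq:KKT_cond} is both necessary and sufficient for global optimality. Hence it suffices to exhibit a primal point $x'^{*}$ and multipliers $\{\lambda_{ij}\ge 0\}$ satisfying \eqref{eq:KKT_1}, the stationarity condition $\sum_{i,j}\lambda_{ij}\,\partial d_{ij}/\partial x'=0$, and complementary slackness \eqref{eq:KKT_3}. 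The first consequence I would draw is from \eqref{eq:KKT_3}: a nonzero $\lambda_{ij}$ is possible only at the \emph{active} elements, i.e. those attaining $\max_{i,j}d_{ij}$, so the whole argument reduces to locating the maximiser of $\{d_{ij}\}$ as a function of $x'$.

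To pin down the active set I would establish two monotonicity facts. Writing $A_{j}=x'+(2j+1)a$ and $B_{i}=\sqrt{y'^{2}+(h'+(2i+1)a)^{2}}$, so that $d_{ij}=\sqrt{A_{j}^{2}+B_{i}^{2}}+\sqrt{(D-A_{j})^{2}+B_{i}^{2}}$, one checks that $d_{ij}$ is strictly increasing in $B_{i}$ and, under the standing geometric assumption $h'\ge 0$, that $B_{i}$ is increasing in $i$; hence $\max_{i,j}d_{ij}=\max_{j}d_{M-1,j}$, i.e. only the top row can be active. Next, for fixed $x'$, $d_{M-1,j}=g(A_{j})$ with $g(u)=\sqrt{u^{2}+B_{M-1}^{2}}+\sqrt{(D-u)^{2}+B_{M-1}^{2}}$ convex in $u$ and symmetric about $u=D/2$; since the centres $A_{0}<A_{1}<\dots<A_{N-1}$ are equally spaced, a convex function over them is maximised at an endpoint, so $\max_{j}d_{M-1,j}=\max\{d_{M-1,0},\,d_{M-1,N-1}\}$. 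Therefore the only candidate active elements are the two upper corners $(M-1,0)$ and $(M-1,N-1)$.

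Now I would invoke stationarity. Differentiating gives $\partial d_{ij}/\partial x' = A_{j}/\sqrt{A_{j}^{2}+B_{i}^{2}} - (D-A_{j})/\sqrt{(D-A_{j})^{2}+B_{i}^{2}}$, which has the sign of $A_{j}-D/2$. For $\sum_{i,j}\lambda_{ij}\,\partial d_{ij}/\partial x'=0$ to hold with $\lambda_{ij}\ge 0$, $\sum\lambda_{ij}=1$ and support contained in $\{(M-1,0),(M-1,N-1)\}$, the two corner derivatives must have opposite signs (a single active corner would be forced to have $A_{j}=D/2$, i.e. to sit at the minimum of $g$, contradicting that it attains the maximum when $N\ge 2$); hence $A_{0}<D/2<A_{N-1}$ and both constraints are tight, so $d_{M-1,0}=d_{M-1,N-1}$. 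Since $g$ is strictly decreasing on $(-\infty,D/2]$ and strictly increasing on $[D/2,\infty)$, this equality together with $A_{0}<D/2<A_{N-1}$ holds precisely when the two extreme column centres are reflections of one another about $D/2$; solving the resulting linear equation in $x'$ yields the optimal location $x'^{*}$ of \eqref{eq:glo_opt_loc}, after which $\lambda_{M-1,0}=\lambda_{M-1,N-1}=\tfrac12$ (all other $\lambda_{ij}=0$) is readily seen to satisfy \eqref{eq:KKT_1}--\eqref{eq:KKT_3}. By Lemma~\ref{lemma3}, this $x'^{*}$ is then the IRS location maximising $\widehat{\widetilde{P}}_r$.

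The main obstacle is the second step: reducing the $MN$-element minimax to the two-corner balancing problem. This requires the two monotonicity/convexity claims to be proved carefully, and it is worth flagging the standing assumption $h'\ge 0$ so that the top row is the farthest from the direct path (if the panel straddled the path one would instead get the symmetric corners $i\in\{0,M-1\}$, but the $x'$-balancing is unchanged), together with a check that no interior column ties the maximum at $x'^{*}$ and that $0\le x'^{*}\le D$, so that $\widetilde{C}1$ remains inactive (which holds whenever the panel fits between $\mathcal{A}$ and $\mathcal{U}$). The remaining KKT algebra that produces \eqref{eq:glo_opt_loc} is routine.
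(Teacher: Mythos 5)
Your proposal takes a genuinely different---and structurally more faithful---route through the KKT system than the paper does. The paper's own proof argues that, because the elements sit on a grid, ``no two path lengths can be equal,'' so exactly one multiplier $\lambda_{ij}$ is nonzero; it then identifies the single maximal path length as $d_{M-1,0}$ when the panel's middle axis lies left of $D/2$ (and $d_{M-1,N-1}$ otherwise), restricts the search to $x'\in[0,D/2-(N-1)a]$ in the first case, shows $d_{M-1,0}$ is convex and still decreasing on that interval, and concludes that the minimum of the maximum path length is attained at the common boundary $x'=D/2-(N-1)a$ of the two cases. You instead keep two active corners and balance their derivatives, which is the standard resolution of a one-dimensional minimax. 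Your observation that a single active corner would force $A_j=D/2$ and hence make that corner the \emph{minimizer} rather than the maximizer of the top row is a correct refutation of the paper's single-multiplier claim (at the optimum the path lengths of the two extreme columns \emph{must} tie), and your reduction of the active set to the two upper corners via monotonicity in $i$ and convexity in $j$ is a cleaner justification than the paper's appeal to the figure.

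However, the last step of your argument does not deliver the stated formula, and this is a genuine gap. With $A_j=x'+(2j+1)a$, the balancing condition that the two extreme column \emph{centres} be reflections about $D/2$ reads $A_0+A_{N-1}=D$, i.e. $2x'+2Na=D$, giving $x'=\frac{D}{2}-Na$ rather than $\frac{D}{2}-(N-1)a$ as asserted in \eqref{eq:glo_opt_loc}. The equalized-corner point is not $\frac{D}{2}-(N-1)a$: at that location $|A_{N-1}-D/2|=Na>(N-2)a=|A_0-D/2|$, so $d_{M-1,N-1}>d_{M-1,0}$ and the maximum is still strictly decreasing as $x'$ moves left toward $\frac{D}{2}-Na$. (The value $\frac{D}{2}-(N-1)a$ is what one obtains by symmetrizing the left \emph{edges} $x'$ and $x'+2(N-1)a$ of the first and last columns about $D/2$, which is not the quantity that enters the path-length formula \eqref{eq:d_exp_multi}.) So the sentence ``solving the resulting linear equation in $x'$ yields the optimal location $x'^{*}$ of \eqref{eq:glo_opt_loc}'' is where your proof fails as a proof of the lemma as stated: carried out correctly, your (otherwise sound) method lands at a different point than the one the lemma claims, and you would need either to reconcile that discrepancy or to flag it explicitly.
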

\begin{proof}
	From the KKT condition in~\eqref{eq:KKT_3}, if there are more than one Lagrangian multipliers, for instance, $\lambda_{ij},\lambda_{i^{'}j^{'}}>0$, then the corresponding path lengths, $d_{ij}=d_{i^{'}j^{'}}=t$. However, in the IRS panel, the elements are arranged in a grid, therefore, no two path lengths can be equal. So, only one Lagrangian multiplier will be $>0$ and the remaining will be $=0$. Also, from~\eqref{eq:KKT_1}, the nonzero Lagrangian multiplier is $=1$ which is associated with the maximum path length. If we follow Fig.~\ref{fig:sys_topo_multi}, for the middle axis of the panel $<D/2$, $d_{ij}$ is maximum for $i=M-1$ and $j=0$, else for $>D/2$, it is maximum for $i=M-1$ and $j=N-1$. Therefore, if we consider the former case, the search region for the location, $x'$ is $[0,D/2-(N-1)a]$ in order to minimize $d_{M-1,0}$ in $x'$, using~\eqref{eq:d_exp_multi}. Now, to determine it, in general, the first and second derivatives of $d_{ij}$ in $x'$ are given by
	\begin{subequations}
		\begin{align}\nonumber
			&\frac{\partial d_{ij}}{\partial x'}=\frac{x'+(2j+1)a}{\sqrt{(x'+(2j+1)a)^2+y'^2+(h'+(2i+1)a)^2}}\\\label{eq:d_first_der}
			&\hspace{11mm}-\frac{D-x'-(2j+1)a}{\sqrt{(D-x'-(2j+1)a)^2+y'^2+(h'+(2i+1)a)^2}},\\\nonumber
			&\frac{\partial^2 d_{ij}}{\partial x'^2}=\mathcal{Z}\left[\left(\frac{1}{(x'+(2j+1)a)^2+y'^2+(h'+(2i+1)a)^2}\right)^{\frac{1}{3}}\right.
			\\\label{eq:d_second_der}&\hspace{0mm}\left.+\left(\frac{1}{(D-x'-(2j+1)a)^2+y'^2+(h'+(2i+1)a)^2}\right)^{\frac{1}{3}}\right]>0.
		\end{align}
	\end{subequations}
	where $\mathcal{Z}=y'^2+(h'+(2i+1)a)^2$. From~\eqref{eq:d_second_der}, we find that $\frac{\partial^2 d_{ij}}{\partial x'^2}>0$, therefore, $d_{ij}$ is strictly convex in $x'$. Thus, using~\eqref{eq:d_first_der}, the global minimum value of $d_{ij}$ can be obtained by calculating the root of  $\frac{\partial d_{ij}}{\partial x'}=0$ which gives $x'=D/2-a$ for $i=M-1$ and $j=0$. But, the search region is $[0,D/2-(N-1)a]$ and $\frac{\partial d_{ij}}{\partial x'}$ is strictly decreasing in the region. Therefore, the minimum value of $d_{M-1,0}$ is obtained at $x'=D/2-(N-1)a$. Similarly, for the later case also, it can be shown that the minimum value of the maximum path length, $d_{M-1,N-1}$ is achieved at $x'=D/2-(N-1)a$. Thus, the globally optimal value of location $x'$ is given by~\eqref{eq:glo_opt_loc}.
\end{proof}
Using~\eqref{eq:d_exp_multi}, the minimum path length, $d_{ij}^*$ at $x'^*$ is given by $d_{ij}^*=\{d_{ij}|x'=x'^*\}$. Note that the optimal phase shift $\tilde{\theta}_{ij}$ in~\eqref{eq:opt_phase} is obtained at a given $d_{ij}$, the jointly optimal phase shift, $\theta_{ij}^*$ at $d_{ij}^*$ is obtained as
\begin{align}\label{eq:opt_phase_j}
	\theta_{ij}^*=\{\tilde{\theta}_{ij}|d_{ij}=d_{ij}^*\}.
\end{align}

Next, we validate the global optimality of the jointly optimal solution, $(x'^*,\{\theta_{ij}^*\})$ using the obtained numerical results discussed in the following section.
\color{black}

\section{Numerical Results and Analysis}

\subsection{Results Obtained for Single IRS Element}

Unless otherwise specified, the default value of the system parameters are set as \cite{wang06,kum03}: $P_t=2$ W, $D=10$ m, $h=4$ m, $s=8$ m$^2$, and $\lambda = 0.3278$ m, frequency, $f=915$ MHz.

\begin{figure}[!t]
		\centering  \includegraphics[width=2.6in]{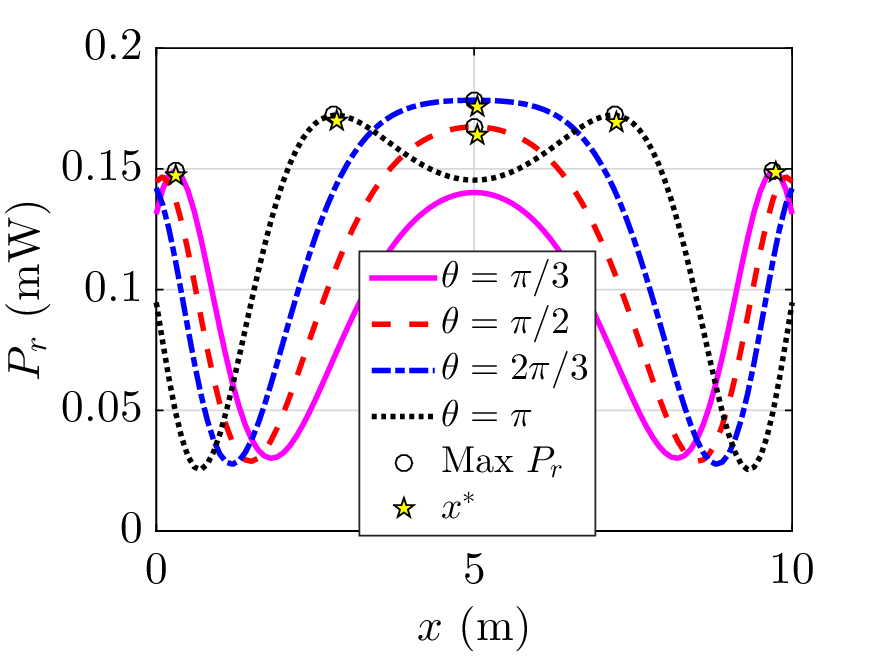}
		\caption{\small Design insights for optimal location of IRS for a given phase shift.}
		\label{fig:Opt_loc}
\end{figure}

\begin{figure}[!t]
		\centering  \includegraphics[width=2.6in]{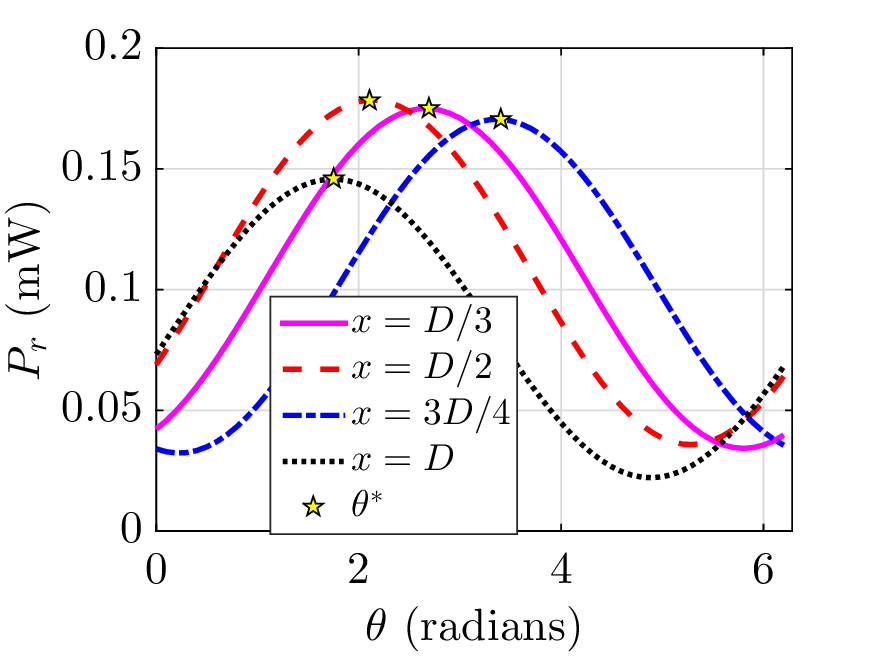}
		\caption{\small Using semi-adaptive scheme, the obtained optimal phase shift, $\theta^*$ for a given IRS location, $x$.}    \label{fig:opt_ang}
\end{figure}

Using Fig.~\ref{fig:Opt_loc}, we observe the variation of received power, $P_r$ with IRS location, $x$ for the given phase shift, $\theta$ at IRS. As the plot has multiple peak points, it depicts that problem (P1) is nonconvex. Here, the curves are symmetric about $x=D/2$, because, $P_r$ in~\eqref{eq:Pr_trac} varying in reflected path length, $d$ for a given phase shift, $\theta$. And, from~\eqref{eq:d_exp}, each value of $d$ satisfied by two values of $x$ which are symmetric about $x=D/2$. Further, Fig.~\ref{fig:Opt_loc} also describes the feasible optimal solution for problem (P1) where the obtained optimal solution using curve fitting polynomial equation in~\eqref{eq:Pr_poly} provides near globally optimal solution determined numerically using the maximum value of $P_r$ for a given $\theta$. If we compare the obtained optimal solutions for different phase shifts, for $\theta=\pi/3$, the optimal location of IRS is near to either AP, $\mathcal{A}$ or user, $\mathcal{U}$ as described in~\cite{wu07}. But, with $\theta$, optimal location, $x^*$ converges into $x=D/2$ and again diverges near to $\mathcal{A}$ or $\mathcal{U}$ for higher value of $\theta$. Note that for $\theta=\pi/3$ and $\pi$, we obtain the two solutions comprising equal maximum $P_r$, and one of them can be set as optimal location, $x^*$ as described in Section~\ref{sec:opt_loc}. 

Instead, in Fig.~\ref{fig:opt_ang}, we examine the obtained optimal phase shift using semi-adaptive scheme where the power, $P_r$ varies with $\theta$ for a given location, $x$. The curves in the plot follow the sinusoidal variations due to cosine factor in~\eqref{eq:Pr_trac} that dignifies (P2) as a nonconvex problem. However, as described in Section~\ref{sec:opt_phase}, over the restricted range of $\theta$, the problem becomes convex with unique globally optimal solution as shown in Fig.~\ref{fig:opt_ang}. Besides, with $x$, the optimal phase shift, $\theta^*$ also increases and maximum $P_r$ is achieved at $\theta^*=2.1$ Radians when the IRS is located at $x=D/2$.   

\begin{figure}[!t]
	
		\centering  \includegraphics[width=2.6in]{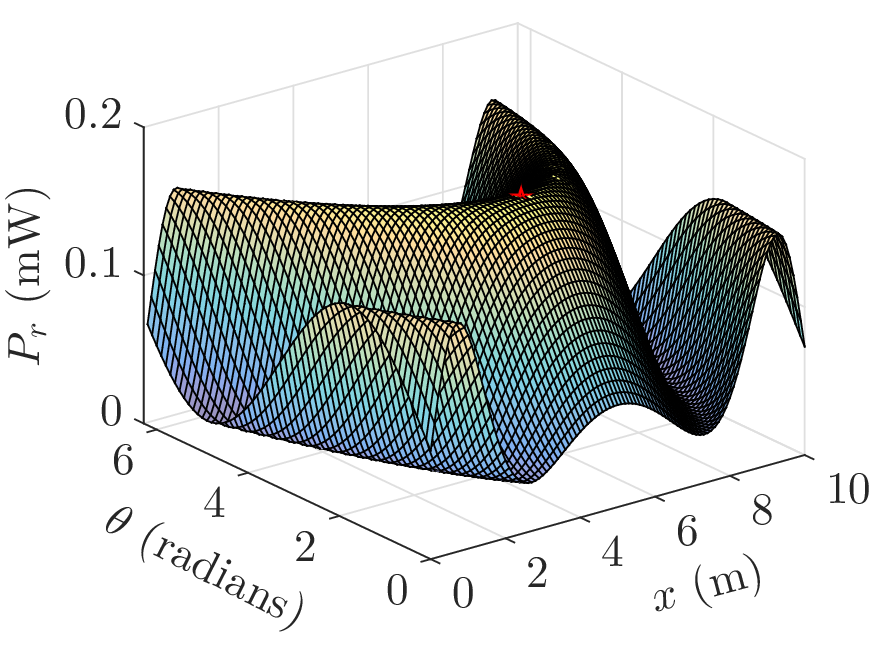}
		\caption{\small Insights on the joint variation of $P_r$ with $x$ and $\theta$.}
		\label{fig:3D_plot}
\end{figure}
	
\begin{figure}[!t]
		\centering  \includegraphics[width=2.8in]{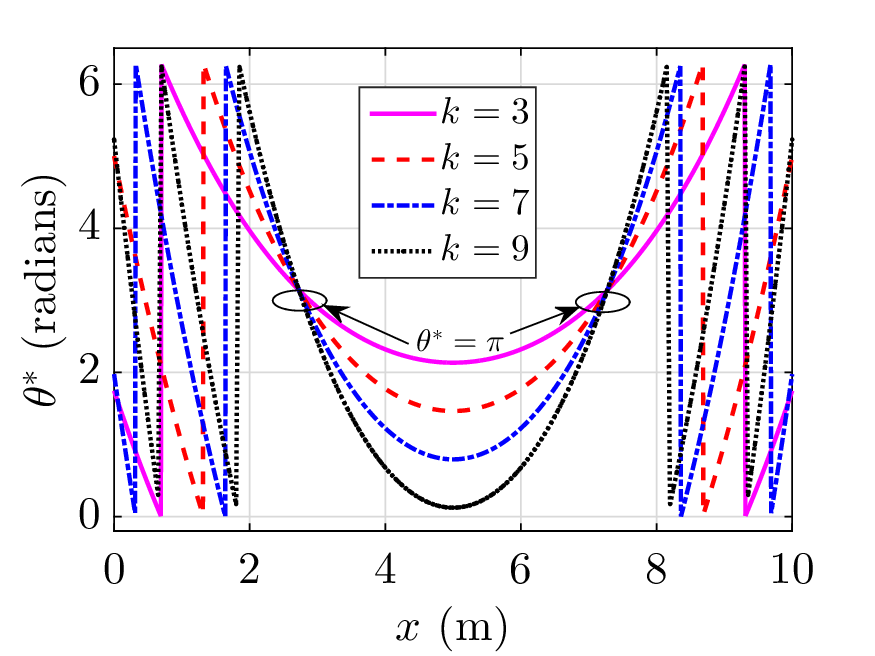}
		\caption{\small For a given wavenumber $k$, the variation of optimal phase shift, $\theta^*$ with location, $x$.}    \label{fig:opt_ang_loc}
\end{figure}

The variation of $P_r$ with $\theta$ and $x$ is shown in Fig.~\ref{fig:3D_plot} which depicts that the joint optimization problem is nonconvex due to multiple peak points. Also, the variation is in sinusoidal form due to cosine factor in the expression of $P_r$ in~\eqref{eq:Pr_trac}. Besides, variation of $P_r$ in $x$ ($\theta$) for a fixed $\theta$ ($x$) is not unimodal, so, the problems for semi adaptive schemes are nonconvex too as described using Figs.~\ref{fig:Opt_loc} and \ref{fig:opt_ang}. For the given setting of system parameters, the jointly optimal solution is $(x^*=5.13 \text{ m}, \theta^*=2.21 \text{ radians})$ which provides the maximum $P_r=0.18$ mW. 

Fig.~\ref{fig:opt_ang_loc} describes the variation of optimal phase shift, $\theta^*$ with location, $x$ for different values of wavenumber, $k$ which is obtained using the semi-adaptive scheme as described in Section~\ref{sec:opt_phase}. Here, the value of phase shift is concealed withing the range $[0,2\pi]$ using the modulo $2\pi$ operation. It can be observed that $\theta^*$ decreases with $x$ until $x<5$ m, otherwise, it always increases. For, $x\in(0\text{ m},2\text{ m})\cup (8\text{ m},10\text{ m})$, the curves switches at $\theta^*=0$ and $2\pi$ radians as the rate of change of $\theta^*$ is high over the range of $x$. Also, all the curves meet at $\theta^*=\pi$ as at corresponding location, the factor $(d-D)=\pi$ and for $k\in\{3,5,7,9\}$, $\theta^*=k(d-D)$ is odd integral multiple of $\pi$ which is always $\pi$ after modulo $2\pi$ operation. For a given $k$, $\theta^*$ gets its minimum value at $x=5$ m and with $k$, the minimum value decreases.

\subsection{Results Obtained for Multiple IRS Element}

Here, we investigate the analysis for multiple IRS elements. For the obtained numerical results, unless specified, the default system parameters are set as~\cite{moh08,ozd09}: $P_t=10$ W, $D=100$ m, $y=0.5$ m, $h=25$ m, $\lambda=0.12$ m, $a=0.0075$ m, $\Gamma = 0.5$, and $M=N=20$.

\begin{figure}[!t] 
	\centering  \includegraphics[width=3.4in]{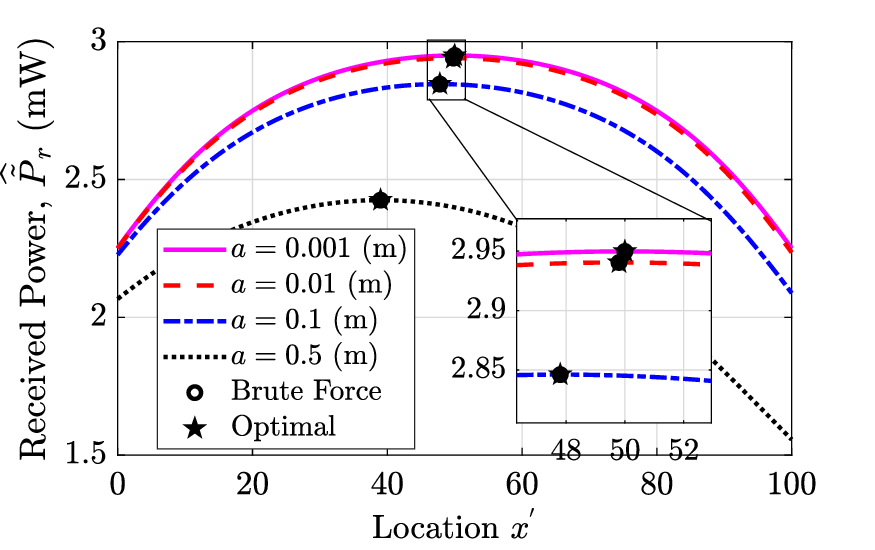}
	\caption{\small Insights on the variation of power, $\widehat{\widetilde{P}}_r$ in~\eqref{eq:rx_power_opt_multi} with size, $a$ and validation of optimal solution, $(x^*, \{\theta_{ij}^*\})$ in~\eqref{eq:glo_opt_loc} and \eqref{eq:opt_phase_j}.}    
	\label{fig:val_joint}\vspace{-4mm}
\end{figure}

Using Fig.~\ref{fig:val_joint}, we validate the global optimality of the obtained jointly optimal solution in~\eqref{eq:glo_opt_loc} and \eqref{eq:opt_phase_j}. In this figure, the received power, $\widehat{P}_{r}$ (cf. \eqref{eq:glo_opt_loc}) at the optimal phase shifts, $\{\tilde{\theta}_{ij}\}$ (cf.~\eqref{eq:opt_phase}) is varying with $x$ for different sizes, $a$ of the elements in the IRS panel. It shows that the variation is unimodal in nature and obtained solution to maximize the received power is globally optimal. After comparison of the performance of the obtained jointly optimal solution, $(x^*, \{\theta_{ij}^*\})$ against brute-force search, we find that both yield almost the same performance. Further, with the increment in $a$, the optimal location, $x^*$ of the IRS shifts left, towards the BS, $\mathcal{B}$, because the leftmost point, $x$ moves leftward about the middle point, $D/2$. Also, the power, $\widehat{\widetilde{P}}_r$ decreases exponentially because the shortest average path length via the reflection occurs when the elements are located at $D/2$ but with size, $a$ the elements disperse around $D/2$ which increases the average path length. 


\begin{figure}[!t] 
	\centering  \includegraphics[width=3.4in]{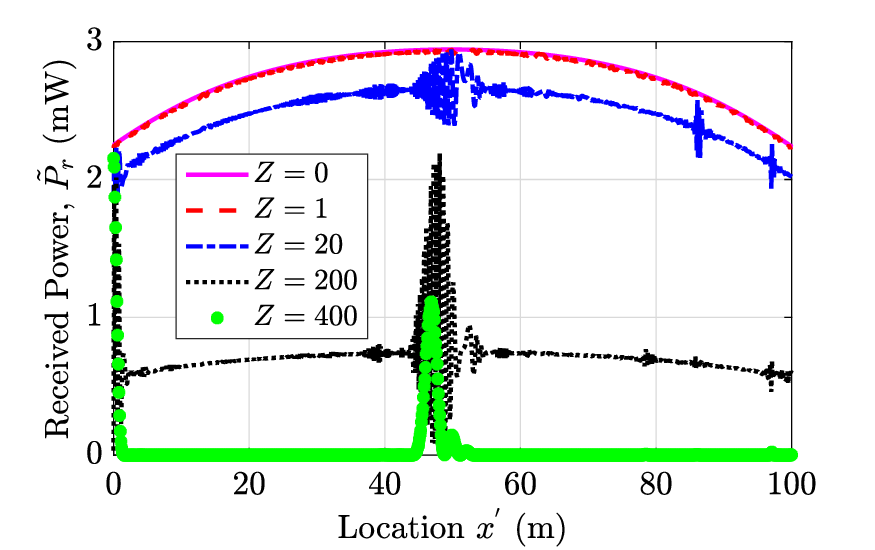}
	\caption{\small Influence on the received power, $\widetilde{P}_r$ in~\eqref{eq:Pr_trac_multi} due to different phase shifts at the IRS elements.}    
	\label{fig:Pr_phase}\vspace{-2mm}
\end{figure}

As discoursed in Section~\ref{sec:prob_form}, the amount of received power at $\mathcal{U}$ depends on the constructive and destructive interferences. To realize it, we have plotted the variation of $P_r$ in~\eqref{eq:Pr_trac} with location, $x$ for different values of phase shifts at IRS elements as shown in Fig.~\ref{fig:Pr_phase}. Here, $Z$ denotes the number of elements following the fixed phase shift, $2\pi$, whereas the remaining $(400-Z)$ elements have the optimal phase shifts, $\tilde{\theta}_{ij}$. When all the elements are set at optimal phase shifts $(Z=0)$ the variation of $P_r$ is unimodal with $x$, but if even a single element has a fixed phase shift $(Z=1)$, the variation is non-unimodal and $P_r$ deteriorates. Furthermore, for higher value of $Z$, the destructive interference at $\mathcal{U}$ is dominant and for $Z=400$, $P_r$ is almost zero. Besides, for $Z>20$, all the curves have spikes at around $x=D/2$ due to the high swing between constructive and destructive interference. 

\begin{figure}[!t] 
	\centering  \includegraphics[width=3.4in]{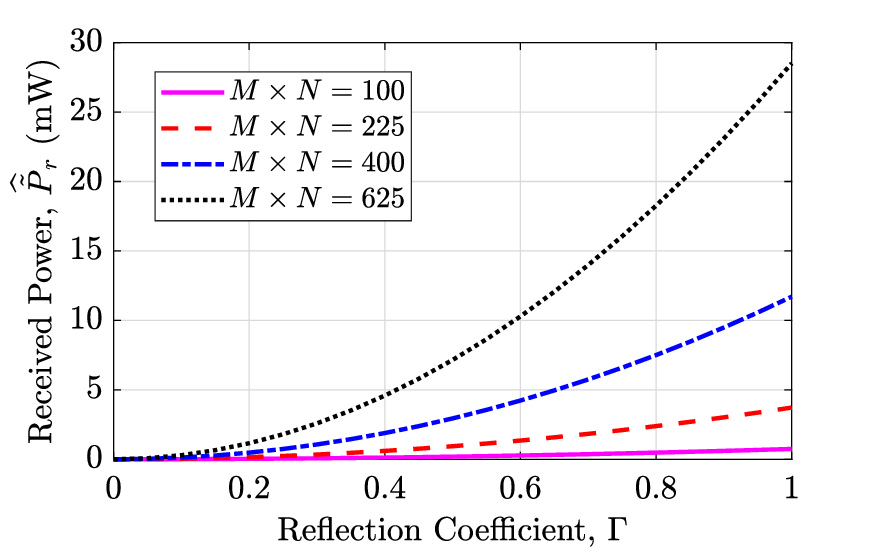}
	\caption{\small Impact on received power, $\widehat{\widetilde{P}}_r$ due to the number of elements and the reflection coefficient.}    
	\label{fig:Pr_Gamma}\vspace{-4mm}
\end{figure}

Using Fig.~\ref{fig:Pr_Gamma}, we investigate the influence of the reflection coefficient, $\Gamma$ and the number of elements, $M\times N$ on the received power, $\widehat{P}_r$. The figure depicts the variation of $\widehat{\widetilde{P}}_r$ with $\Gamma$ for different value of $M\times N$. Here, we can observe that for a given $M\times N$, $\widehat{\widetilde{P}}_r$ increases quadratically with $\Gamma$, because, $\widehat{\widetilde{P}}_r$ in~\eqref{eq:rx_power_opt_multi} is a polynomial function of $\Gamma$ of degree $2$. Besides, for a given $\Gamma$, $\widehat{P}_r$ significantly increases with number of elements, $M\times N$ due to increment in number of constructive interference of the reflected signals at the user, $\mathcal{U}$. For instance, the average received powers for $M\times N=1$ (in~\cite{jyo10}) and $100$ elements IRS are $0.2$ $\mu$W and $0.3$ mW, respectively.

\begin{figure}[!t] 
	\centering  \includegraphics[width=3.4in]{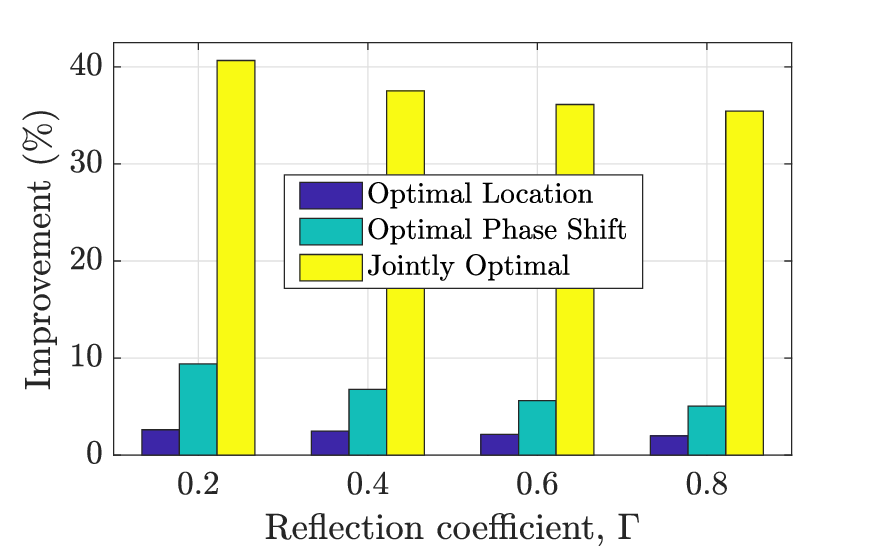}
	\caption{\small Performance comparison of different optimization schemes against the benchmark in~\cite{you11}.}    
	\label{fig:perf_comp}\vspace{-4mm}
\end{figure}

In Fig.~\ref{fig:perf_comp}, the different optimization schemes are compared in terms of received power against the closest competitive scheme presented in ~\cite{you11} where IRS location is set near to BS, i.e., $x=0$ and phase shift, $\theta_{ij}=2\pi$ $\forall i,j$. For the suboptimal schemes, optimal location and optimal phase shift, $x$ and $\{\theta_{ij}\}$ are optimized while keeping $\theta_{ij}=2\pi$; $\forall i,j$ and $x=0$, respectively. The figure shows the percentage performance improvement of the optimization schemes at different $\Gamma$ using a bar diagram. It can be observed that the joint optimization scheme gives significantly high performance enhancement than the suboptimal schemes. Also, optimal phase shift scheme provides always better performance than optimal location because, it plays the dominant role in the constructive interference of the reflected signals. On average, the optimal location, optimal phase shift, and the joint optimization provide $2.30\%$, $6.71\%$, and $37.44\%$, respectively.

\section{Concluding Remarks}

This paper elegantly analyzes the enhancement of received power in an IRS-assisted communication system by jointly optimizing the phase shifts of IRS elements and its location. The numerical investigation demonstrates the global optimality of the obtained solution, highlighting the critical influence of phase shifts on the interference of received signal components. The research shows that optimizing even a single IRS element's phase shift significantly impacts the non-unimodal variation of received power with IRS location, affecting overall performance. Moreover, increasing the number of IRS elements substantially improves received power. The proposed joint optimization scheme exhibits an impressive $37\%$ performance improvement compared to the closest competitive scheme. In conclusion, this work advances IRS technology's potential for beyond B5G and 6G networks, providing valuable insights into efficient and reliable wireless communication and paving the way for exciting innovations in this field.\\

\appendices
\setcounter{equation}{0}
\setcounter{figure}{0}
\renewcommand{\theequation}{A.\arabic{equation}}
\renewcommand{\thefigure}{A.\arabic{figure}} 
\enlargethispage*{2\baselineskip}
\section{}\label{App_A}

From~\eqref{eq:rx_pow}, the received power, $P_r$ at user, $\mathcal{U}$ is:
\begin{align}\nonumber
	\textstyle P_{r}&=\textstyle {P_{t}}\left( \frac{\lambda}{4\pi} \right) ^2\left| \frac{1}{D}e^{-jkD}+\frac{\Gamma}{d}e^{j\theta}e^{-jkd} \right|^2
	\\\nonumber
	&= \textstyle {P_{t}}\left( \frac{\lambda}{4\pi} \right) ^2\left| \frac{\cos \left( kD \right)
		-j\sin \left( kD \right)}{D}+\frac{\Gamma}{d}\left( \cos \left( kd-\theta \right)\right.\right.\\\nonumber
	&\left.\textstyle -j\sin \left( kd-\theta \right) \right) \bigg|^2
	\\\nonumber
	&=\textstyle {P_{t}}\left( \frac{\lambda}{4\pi} \right) ^2\left| \frac{\cos \left( kD \right)}{D}+\frac{\Gamma \cos \left( kd-\theta \right)}{d}\right.
	\\\label{eq:AppA1}
	&\left. \textstyle -j\left( \frac{j\sin \left( kD \right)}{D}+\frac{\Gamma \sin \left( kd-\theta \right)}{d} \right) \right|^2
\end{align}

Further, \eqref{eq:AppA1} can be simplified as:
\begin{align}\nonumber
	\textstyle P_{r}&=\textstyle {P_{t}}\left( \frac{\lambda}{4\pi} \right) ^2\left[ \left( \frac{\cos \left( kD \right)}{D}+\frac{\Gamma}{d}\cos \left( kd-\theta \right) \right) ^2\right.
	\\\nonumber
	&\left.\textstyle +\left( \frac{\sin \left( kD \right)}{D}+\frac{\Gamma \sin \left( kd-\theta \right)}{d} \right) ^2 \right]
	\\\nonumber
	&=\textstyle {P_{t}}\left( \frac{\lambda}{4\pi} \right) ^2\left[
	\frac{\cos ^2\left( kD \right)}{D^2}+\frac{\Gamma ^2}{d^2}\cos ^2\left( kd-\theta \right) \right.
	\\\nonumber
	&\left.\textstyle +\frac{2\Gamma}{dD}\cos \left( kD \right) \cos \left( kd-\theta \right) +\frac{\sin ^2\left( kD \right)}{D^2}\right.
	\\\nonumber
	&\left.\textstyle \frac{\tau ^2\sin ^2\left( kd-\theta \right)}{d^2}+\frac{2\Gamma}{dD}\sin \left( kD \right) \sin \left( kd-\theta \right) \right]
	\\\nonumber
	&=\textstyle {P_{t}}\left( \frac{\lambda}{4\pi} \right) ^2\left( \frac{1}{D^2}+\frac{\Gamma ^2}{d^2}+\frac{2\Gamma}{dD}\left( \cos \left( kD \right) \cos \left( kd-\theta \right) \right.\right.
	\\\nonumber
	&\left.\left.\textstyle +\sin \left( kD \right) \sin \left( kd-\theta \right) \right) \right)
	\\
	\!\!\!\!\!&=\textstyle {P_{t}}\left( \frac{\lambda}{4\pi} \right) ^2\hspace{-1mm}\left(\hspace{-0.5mm} \frac{1}{D^2}\hspace{-0.5mm}+\hspace{-0.5mm}\frac{\Gamma ^2}{d^2}\hspace{-0.5mm}+\hspace{-0.5mm}\frac{2\Gamma}{dD}\cos \left( kd-\theta -kD \right)\!\! \right)
\end{align}

\section{Proof of Lemma~\ref{lemma2}}\label{App_B}
To obtain the received power expression using~\eqref{eq:rx_pow_multi} in more tractable form, we adopt the induction method where the received power expressions for the reflection from one and two elements of the IRS are obtained as follows. If we consider the reflection from the $ij$th element of the IRS panel, as obtained in [A.2], the received power $\widetilde{P}_1$ at user $\mathcal{U}$ due to multi-elements IRS is given by

\begin{align}\label{eq:app_rx_1}
	\!\!\!\!\!\textstyle \widetilde{P}_{r,1} = \textstyle {P_{t}}\hspace{-0.5mm}\left(\hspace{-0.5mm} \frac{\lambda}{4\pi}\hspace{-1mm} \right) ^2\hspace{-1mm}\left(\hspace{-1mm} \frac{1}{D^2}\hspace{-0.5mm}+\hspace{-0.5mm}\frac{\Gamma ^2}{d_{ij}^2}\hspace{-0.5mm}+\hspace{-0.5mm}\frac{2\Gamma}{d_{ij} D}\cos \left( kd_{ij} -\theta_{ij} -kD \right)\!\! \right).\!\!\!\!
\end{align}

Further, using~\eqref{eq:rx_pow_multi}, for the reflection from the two elements, $ij$th and $i(j+1)$th, the received power, $\widetilde{P}_{r,2}$ is expressed as

\begin{align}\nonumber
	&\textstyle \widetilde{P}_{r,2}={P_{t}}\left( \frac{\lambda}{4\pi} \right) ^2\Big| \frac{1}{D}e^{-jkD}+\frac{\Gamma}{d_{ij}}e^{j(\theta_{ij}-kd_{ij})}+\frac{\Gamma}{d_{i(j+1)}}\\\nonumber&\hspace{10mm}\textstyle \times e^{j(\theta_{i(j+1)}-kd_{i(j+1)})}\Big|^2
	\\\nonumber
	&\textstyle=\hspace{-0.5mm}{P_{t}}\hspace{-0.5mm}\left( \frac{\lambda}{4\pi} \hspace{-0.5mm}\right)^2\hspace{-0.5mm}\bigg| \hspace{-0.5mm}\bigg(\hspace{-0.5mm}\frac{\cos \left( kD \right)}{D}\hspace{-0.5mm}+\hspace{-0.5mm}\frac{\Gamma \cos \left( \theta_{ij}-kd_{ij} \right)}{d_{ij}}+\frac{\Gamma \cos \left( \theta_{i(j+1)}-kd_{i(j+1)} \right)}{d_{i(j+1)}}\hspace{-0.5mm}\bigg)
	\\\nonumber
	&\left.\textstyle\hspace{4mm}+j\left( \frac{-\sin \left( kD \right)}{D}+\frac{\Gamma \sin \left( \theta_{ij}-kd_{ij} \right)}{d_{ij}}+\frac{\Gamma \sin \left( \theta_{i(j+1)}-kd_{i(j+1)} \right)}{d_{i(j+1)}} \right) \right|^2\\\nonumber
	&\textstyle={P_{t}}\left( \frac{\lambda}{4\pi} \right) ^2\hspace{-1mm}\left(\hspace{-0.5mm} \frac{1}{D^2}\hspace{-0.5mm}+\hspace{-0.5mm}\frac{\Gamma ^2}{d_{ij}^2}\hspace{-0.5mm}+\hspace{-0.5mm}\frac{\Gamma ^2}{d_{i(j+1)}^2}\hspace{-0.5mm}+\hspace{-0.5mm}\frac{2\Gamma}{d_{ij}D}\cos \left( kd_{ij}-\theta_{ij} -kD \right)\right.\\\nonumber\label{eq:App_rx_2}&\left.\textstyle\hspace{4mm}+\frac{2\Gamma}{d_{i(j+1)}D}\cos \left( kd_{i(j+1)}-\theta_{i(j+1)} -kD \right)+\hspace{-0.5mm}\frac{2\Gamma^2}{d_{ij}d_{i(j+1)}}\right.\\&\left.\hspace{4mm}\times\cos \left( \theta_{ij}-kd_{ij}-\theta_{i(j+1)} +kd_{i(j+1
		)} \right)\right).
\end{align}

Thus, using~\eqref{eq:app_rx_1} and \eqref{eq:App_rx_2}, recursively, the received power, $\widetilde{P}_r$ in~\eqref{eq:rx_pow_multi} can be expressed as in~\eqref{eq:Pr_trac_multi}.

\makeatletter
\renewenvironment{thebibliography}[1]{%
	\@xp\section\@xp*\@xp{\refname}%
	\normalfont\footnotesize\labelsep .5em\relax
	\renewcommand\theenumiv{\arabic{enumiv}}\let\p@enumiv\@empty
	\vspace*{-1pt}
	\list{\@biblabel{\theenumiv}}{\settowidth\labelwidth{\@biblabel{#1}}%
		\leftmargin\labelwidth \advance\leftmargin\labelsep
		\usecounter{enumiv}}%
	\sloppy \clubpenalty\@M \widowpenalty\clubpenalty
	\sfcode`\.=\@m
}{%
	\def\@noitemerr{\@latex@warning{Empty `thebibliography' environment}}%
	\endlist
}
\makeatother     
\bibliographystyle{IEEEtran}
\bibliography{references_COMML}

\end{document}